\providecommand{\algorithmname}{Algorithm}
\numberwithin{equation}{section}
\numberwithin{figure}{section}
\theoremstyle{plain}
\newtheorem{thm}{\protect\theoremname}[section]
\theoremstyle{definition}
\newtheorem{defn}[thm]{\protect\definitionname}
\theoremstyle{remark}
\theoremstyle{plain}
\newtheorem{lem}[thm]{\protect\lemmaname}
\newtheorem*{lem*}{Lemma}
\theoremstyle{remark}
\newtheorem{rem}[thm]{\protect\remarkname}
\theoremstyle{plain}
\theoremstyle{plain}
\providecommand{\claimname}{Claim}
\providecommand{\definitionname}{Definition}
\providecommand{\lemmaname}{Lemma}
\providecommand{\remarkname}{Remark}
\providecommand{\theoremname}{Theorem}
\providecommand{\corollaryname}{Corollary}
\providecommand{\propositionname}{Proposition}
\begin{document}


\title{Robust Recovery of Positive Stream of Pulses }


\author{Tamir Bendory \thanks{T. Bendory is with  The Program in Applied and Computational Mathematics, Princeton University, Princeton, NJ, USA   (e-mail:
tamir.bendory@princeton.edu)}}
\maketitle



\begin{abstract}
The problem of estimating the delays and amplitudes of a positive stream of pulses appears in many applications, such as  single-molecule microscopy. 
This paper suggests estimating the delays and amplitudes using a convex program, which is robust in the presence of noise (or model mismatch). Particularly, the recovery error  is proportional to the
noise level. We further show that the error grows exponentially with the density of the delays and also depends on the localization properties
of the pulse.
\end{abstract}

\begin{IEEEkeywords}
stream of pulses; sparse deconvolution; convex optimization; Rayleigh regularity; dual certificate ; super-resolution
\end{IEEEkeywords}


\section{Introduction}

Signals comprised of stream of pulses play a key role in many engineering applications, such as ultrasound imaging and radar
(see, e.g. \cite{Bar-IlanSub-Nyquis,tur2011innovation,wagner2012compressed,vetterli2002sampling,deslauriers2012spherical}).
In some applications, the signal under examination is known to be real and non-negative. 
For instance, in single-molecule microscopy we measure the convolution
of positive point sources with the microscope's point spread function
\cite{klar2000fluorescence,betzig2006imaging,bronstein2009transient}. 
Another example arises from the problem of estimating the
orientations of the white matter fibers in the brain using
diffusion weighted magnetic resonance imaging (MRI). In this application, the measured data is modeled as the convolution of a sparse positive signal on the sphere, which represents the unknown orientations, with a known point spread function that acts as a low-pass filter \cite{tournier2004direct,bendory2015recovery}.

This paper focuses its attention on the model of  \emph{positive stream of pulses}. 
In this model, the measurements are comprised of a sum of unknown shifts of a kernel $\mathbf{g}$ with positive coefficients, i.e.
\begin{equation} \label{eq:1}
\mathbf{y}[k]=\sum_{m}c_{m}\mathbf{g}\left[k-k_{m}\right]+\tilde{\mathbf{n}}[k],\quad k\in\mathbb{Z},\quad c_{m}>0,
\end{equation}
where $\tilde{\mathbf{n}}$ is a bounded error term (noise, model mismatch) obeying $\left\Vert {\tilde{\mathbf{n}}}\right\Vert _{1}:=\sum_{k\in\mathbb{Z}}\left|\tilde{\mathbf{n}}[k]\right|\leq\delta$. We do not assume any prior knowledge on the noise statistics. 
The pulse $\mathbf{g}$ is assumed to be a sampled version of a scaled continuous kernel, namely, $\mathbf{g}[k]:=\mathbf{g}\left(\frac{k}{\sigma N}\right)$, where
$\mathbf{g}(t)$ is the continuous kernel, $\sigma>0$ is a scaling parameter and $1/N$
is the sampling interval. For instance, if $\mathbf{g}$ is Gaussian kernel, then $\sigma$ denotes its standard deviation. The delayed versions of the kernel, $\mathbf{g}\left[k-k_{m}\right]$,
are often referred to as \emph{atoms}. We aim to estimate the set of delays
$\left\{ k_{m}\right\} \subset\mathbb{Z}$ and the positive amplitudes $\left\{ c_{m}>0\right\} $
from the measured data $\mathbf{y}[k]$.

The sought parameters of the stream of pulses model can be defined by a signal of the form  
\begin{equation}
\mathbf{x}[k]:=\sum_{m}c_{m}\boldsymbol{\delta}\left[k-k_{m}\right],\quad c_{m}>0,\label{eq:x}
\end{equation}
where $\boldsymbol{\delta}[k]$ is the one-dimensional Kronecker Delta function 
\begin{equation*}
\boldsymbol{\delta}[k]:=
\begin{cases}
1, & k=0,\\
0 & k\neq 0.
\end{cases}
\end{equation*}
In this manner, the problem can be thought of as a sparse deconvolution problem, namely,
\begin{equation}
\mathbf{y}[k]=\left(\mathbf{g}\ast\left(\mathbf{x}+\mathbf{n}\right)\right)[k],\label{eq:model1}
\end{equation}
where $'\ast'$ denotes a discrete convolution and $ \mathbf{n}[k] $
is the error term.

The one-dimensional model can be extended to higher-dimensions. In this paper we also analyze in detail the model of two-dimensional positive stream of pulses given by
\begin{eqnarray}
\mathbf{y}[\mathbf{k}] & = & \left(\mathbf{g_{2}}\ast\left(\mathbf{x_{2}}+\mathbf{n}\right)\right)\left[\mathbf{k}\right],\label{eq:model2}\\
 & = & \sum_{m}c_{m}\mathbf{g_{2}}\left[\mathbf{k}-\mathbf{k}_{m}\right]+\tilde{\mathbf{n}}[\mathbf{k}],\nonumber 
\end{eqnarray}
where $\mathbf{k}:=\left[k_{1},k_{2}\right]\in\mathbb{Z}^{2}$ and   $\mathbf{g_2}$ is a two-dimensional pulse. As in the one-dimensional case, the pulse is defined as a sampled version of a two-dimensional kernel $\mathbf{g_{2}}(t_1,t_2)$ by $\mathbf{g_{2}}\left[\mathbf{k}\right]=\mathbf{g_2}\left(\frac{k_{1}}{\sigma_{1}N_{1}},\frac{k_{2}}{\sigma_{2}N_{2}}\right)$. The signal
\begin{equation}
\mathbf{x}_{2}[\mathbf{k}]:=\sum_{m}c_{m}\boldsymbol{\delta}\left[\mathbf{k}-\mathbf{k}_{m}\right],\quad c_{m}>0,\label{eq:x2}
\end{equation}
defines the underlying parameters to be estimated, where here $\boldsymbol{\delta}$ denotes the two-dimensional Kronecker Delta function.
For the sake of simplicity, we assume throughout the paper that $\sigma_{1}N_{1}=\sigma_{2}N_{2}:=\sigma N$. 

Many algorithms have been suggested to recover $\mathbf{x}$ from the stream of pulses $\mathbf{y}$. A naive approach would be to estimate $\mathbf{x}$ via least-squares estimation. However, even if the convolution as in (\ref{eq:model1}) is invertible, the  condition number of its associated convolution matrix tends to be extremely high. Therefore, the recovery process is not robust (see for instance section 4.3 in \cite{beck2014introduction}). Suprisngly, the least-squares fails even in a noise-free environment due to amplification of numerical errors. We refer the readers to Figure 1 in \cite{bendorySOP} for a demonstration of this phenomenon. 

A different line of algorithms includes the well-known Prony method, MUSIC, matrix pencil
and ESPRIT, see for  instance \cite{stoica2005spectral,56027,roy1989esprit,schmidt1986multiple,peter2011nonlinear,filbir2012problem,potts2010parameter,potts2013parameter}. 
These algorithms concentrate on estimating the set of delays. Once the set of  delays is known, the coefficients can be easily estimated by least-squares. 
These methods rely on the observation that in  Fourier domain  the stream of pulses model (\ref{eq:1})  reduces to a weighted sum of  complex exponentials, under the assumption that the Fourier transform of $\mathbf{g}$ is non-vanishing. Recent papers analyzed the performance and stability of these algorithms  \cite{liao2014music, liao2014music2, moitra2014threshold,fannjiang2016compressive}. However, as the Fourier transform of the pulse $\mathbf{g}$ tends to be localized and in general contains small values, the stability results do not hold directly  for the stream of pulses model. Furthermore, these methods do not exploit the positivity of the coefficients (if it exists), which is the focus of this work.

In recent years, many convex optimization techniques have been  suggested and analyzed thoroughly for the task super-resolution. Super-resolution is the problem of resolving signals from their noisy low-resolution measurements, see for instance   \cite{candes2013towards,candes2013super,azais2014spike,bhaskar2011atomic,tang2015near,tang2012compressive}.
The main pillar of these works is the duality between robust super-resolution
and the existence of an interpolating polynomial in the measurement
space, called \emph{dual certificate}. Similar techniques have been applied to super-resolve signals on
the sphere \cite{bendory2013exact,bendorySHalgorithm,bendory2015recovery} (see also \cite{filbir2016exact}) and to the recovery of non-uniform splines from their projection onto the space
of low-degree algebraic polynomials \cite{bendory2013Legendre,de2014non}.

The problem of recovering a general signal $\mathbf{x}$ (not necessarily non-negative)
robustly from stream of pulses was considered in \cite{bendorySOP}.
It was shown that the duality between robust recovery and the existence
of an interpolating function holds in this case as well. Particularly,
it turns out that robust recovery is possible if there exists a function, comprised of shifts 
of the kernel $\mathbf{g}$ and its derivatives, that satisfies several interpolation
requirements (see Lemma \ref{lemma:q}). In this case, the solution
of a standard convex program achieves recovery error (in
$\ell_{1}$ norm) of $C^{*}(\mathbf{g})\delta$, for some constant $C^{*}(\mathbf{g})$ that 
depends only on the convolution kernel $\mathbf{g}$. In \cite{SOP_US} it was proven that
the support of the recovered signal is clustered around the support
of the target signal $\mathbf{x}$. The behavior of the solution for large $N$ is analyzed in detail in  \cite{duval2013exact,duval2015sparse}.

The main insight of \cite{bendorySOP} is that the existence of such interpolating function relies on two interrelated pillars.  First, the support of the signal, defined as $\mbox{supp}(\mathbf{x}):=\left\{ t_{m}\right\} =\left\{ k/N\thinspace:\thinspace \mathbf{x}[k]\text{\ensuremath{\neq}}0\right\} $,
should satisfy a \emph{separation condition} of the form
\begin{equation}
\left|t_{i}-t_{j}\right|\geq\nu\sigma,\quad\forall t_{i},t_{j}\in\mbox{supp}(\mathbf{x}),\thinspace i\neq j,\label{eq:1dseparation}
\end{equation}
for some kernel-dependent constant $\nu>0$ which does not depend on
$N$ or $\sigma$. In the two-dimensional case, the separation condition
gets the form %
\footnote{Recall that we assume for simplicity that $\sigma_{1}N_{1}=\sigma_{2}N_{2}:=\sigma N$. %
} 
\begin{equation}
\left\Vert \mathbf{t}_{i}-\mathbf{t}_{j}\right\Vert _{\infty}\geq\nu\sigma,\quad \mathbf{t}_{i},\mathbf{t}_{j}\in\mbox{supp}(\mathbf{x_{2}}),\thinspace i\neq j,\label{eq:2dseparation}
\end{equation}
where $\mathbf{t}_{i}=\left[t_{i,1},t_{i,2}\right]\in\mathbb{R}^2$ and $\left\Vert \mathbf{t}_{i}-\mathbf{t}_{j}\right\Vert _{\infty}:=\max\left\{ \left|t_{i,1}-t_{j,1}\right|,\left|t_{i,2}-t_{j,2}\right|\right\} $.
The second pillar is that the kernel $\mathbf{g}$ would be an \emph{admissible kernel}. An admissible kernel is a function that satisfies some mild localization
properties. These properties  are discussed in the next section (see Definition
\ref{def:kernel}).
Two prime examples for admissible kernels 
are the Gaussian kernel $\mathbf{g}(t)=e^{-\frac{t^{2}}{2}}$ and the Cauchy
kernel $\mathbf{g}(t)=\frac{1}{1+t^{2}}$. In \cite{bendorySOP}, the minimal
separation constant $\nu$ which is required for the existence of
the interpolating function (and hence, for robust recovery) was evaluated
numerically to be 1.1 and 0.5 for the Gaussian and Cauchy kernels, respectively. 

Inspired by the recent work on super-resolution of positive point sources \cite{morgenshtern2014stable}, this work focuses on the model of positive stream of pulses. In contrast to \cite{bendorySOP}, we prove that in this case the separation condition is no longer necessary to achieve stable recovery.  We generalize and improve the results of \cite{morgenshtern2014stable} as discussed in detail in Section \ref{sec:Main-Result}.
Particularly, we show that positive signals of
the form (\ref{eq:x}) can be recovered robustly from the measurements
$\mathbf{y}$ (\ref{eq:model1}) and the recovery error is proportional to the
noise level $\delta$. Furthermore, the recovery error grows exponentially with the density of the signal's support. We characterize the density of the support using the notion of Rayleigh-regularity, which is defined precisely in Section \ref{sec:Main-Result}. The recovery error also depends on the localization properties of the kernel $\mathbf{g}$. A similar result holds for the two-dimensional case.

We use the following notation throughout the paper. We denote an index $k\in\mathbb{Z}$ by brackets $\left[k\right]$
and a continuous variables $t\in\mathbb{R}$ by parenthesis $\left(t\right)$. We use boldface small and capital letters for vectors and matrices, respectively. Calligraphic letters, e.g. $\mathcal{A}$, are used for sets and $\mathcal{\left|A\right|}$ for the cardinality of the set.
The $\ell^{th}$ derivative of $\mathbf{g}(t)$ is denoted as $\mathbf{g}^{(\ell)}(t)$.
For vectors, we use the standard definition of $\ell_{p}$ norm  as ${\left\Vert \mathbf{a}\right\Vert }_{p}:=\left(\sum_{k\in\mathbb{Z}}\left|\mathbf{a}\left[k\right]\right|^{p}\right)^{1/p}$
for $p\geq1$. We reserve $1/N$ to denote the sampling interval of (\ref{eq:1}) and define  the support of the signal $\mathbf{x}$  as $\mbox{supp}(\mathbf{x}):=\left\{ k/N\thinspace:\thinspace \mathbf{x}[k]\text{\ensuremath{\neq}}0\right\} $.
We write $k\in\mbox{supp}(\mathbf{x})$ to denote some $k\in\mathbb{Z}$
satisfying $k/N\in\mbox{supp}(\mathbf{x})$.

The rest of the paper is organized as follows. Section \ref{sec:Main-Result}
presents some basic definitions and states our  main theoretical results. Additionally, we give a detailed comparison with the literature. 
The results are proved in Sections \ref{sec:Proof_main_th} and \ref{sec:Proof_th2}. Section \ref{sec:Numerical-Experiments} shows numerical
experiments, validating the theoretical results. Section
\ref{sec:Conclusions} concludes  the work and aims
to suggest potential extensions.

\section{Main Results \label{sec:Main-Result}}

In \cite{bendorySOP}, it was shown that the underlying one-dimensional signal $\mathbf{x}$ can be recovered  robustly from a stream of pulses $\mathbf{y}$  if its support satisfies a separation condition of the form (\ref{eq:1dseparation}).
Following \cite{morgenshtern2014stable}, this work deals with non-negative signals and shows that in this case 
the separation condition is not necessary. Specifically, we prove that the recovery error depends on the density of the signal's support. This density is defined and quantified by the notion of Rayleigh-regularity. More precisely, a one-dimensional signal with
Rayleigh regularity $r$ has at most $r$ spikes within a resolution
cell:
\begin{defn}
\label{def:rayleigh}We say that the set $\mathcal{P}\subset\left\{ k/N\right\} _{k\in\mathbb{Z}}\subset\mathbb{R}$
is Rayleigh-regular with parameters $(d,r)$ and write $\mathcal{P}\in\mathcal{R}^{idx}(d,r)$
if every interval $(a,b)\subset\mathbb{R}$ of length $\vert a-b\vert=d$
contains no more that $r$ elements of $\mathcal{P}$:
\begin{equation}
\left|\mathcal{P}\cap(a,b)\right|\leq r.
\end{equation}
\end{defn}
Equipped with Definition \ref{def:rayleigh}, we define the sets of
signals 
\begin{eqnarray*}
\mathcal{R}\left(d,r\right) & := & \left\{ \mathbf{x}\thinspace:\thinspace\mbox{supp}(\mathbf{x})\in\mathcal{R}^{idx}(d,r)\right\}. 
\end{eqnarray*}
We further let $\mathcal{R}_+\left(d,r\right) $ be the set of signals in $\mathcal{R}\left(d,r\right) $ with non-negative values.

\begin{rem}
If $r_{1}\leq r_{2}$, then $\mathcal{R}\left(d,r_{1}\right)\subseteq\mathcal{R}\left(d,r_{2}\right)$.
If $d_{1}\leq d_{2}$, then $\mathcal{R}\left(d_{2},r\right)\subseteq\mathcal{R}\left(d_{1},r\right)$.
\end{rem}

Besides the density of the signal's support,  robust estimation of the delays and amplitudes also depends on the convolution kernel $\mathbf{g}$. Particularly,  the kernel should satisfy some mild localization properties. In short, the kernel and its first derivatives should decay sufficiently fast. We say that a kernel $\mathbf{g}$ is \emph{non-negative admissible}
if it meets the following definition:
\begin{defn}
\label{def:kernel}We say that $\mathbf{g}$ is a { non-negative admissible}
kernel if $\mathbf{g}(t)\geq0$ for all $t\in\mathbb{R}$, and:
\begin{enumerate}
\item $\mathbf{g}\in\mathcal{C}^{3}(\mathbb{R})$ and is even.
\item \underline{Global property:} There exist constants $C_{\ell}>0,\ell=0,1,2,3$
such that $\left|\mathbf{g}^{(\ell)}(t)\right|\leq C_{\ell}/\left(1+t^{2}\right)$.
\item \underline{Local property:} There exist constants $\varepsilon,\beta>0$
such that

\begin{enumerate}
\item $\mathbf{g}(t)<\mathbf{g}(\varepsilon)$ for all $\left|t\right|>\varepsilon$. 
\item $\mathbf{g}^{(2)}(t)\leq-\beta$ for all $\left|t\right|\leq\varepsilon$.
\end{enumerate}
\end{enumerate}
\end{defn}

Now, we are ready to state our one-dimensional theorem, which is proved in Section \ref{sec:Proof_main_th}. The theorem states that in the noise free-case, $\delta=0$, a convex program  recovers the delays and amplitudes exactly, for any Rayleigh regularity parameter $r$. Namely, the convolution system is invertible even without any sparsity prior. Additionally, in the presence of noise or model mismatch, the recovery error grows exponentially with $r$ and is proportional to the noise level.

\begin{thm}
\label{th:main}Consider the model (\ref{eq:model1}) for a non-negative
admissible kernel $\mathbf{g}$ as defined in Definition \ref{def:kernel}. Then, there exists $\nu>0$ such that if $\mathcal{\mbox{supp}(\mathbf{x})}\in\mathcal{R}^{idx}\left(\nu\sigma,r\right)$
and $N\sigma>\left(\frac{1}{2}\right)^{\frac{1}{2r}+1}\sqrt{\frac{\beta}{\mathbf{g}(0)}},$
the solution $\hat{\mathbf{x}}$ of the convex problem 
\begin{equation}
\min_{\tilde{\mathbf{x}}}\quad\left\Vert \tilde{\mathbf{x}}\right\Vert _{1}\quad\mbox{subject to}\quad\left\Vert \mathbf{y}-\mathbf{g}\ast\tilde{\mathbf{x}}\right\Vert _{1}\leq\delta,\thinspace\tilde{\mathbf{x}}\geq0,\label{eq:cvx}
\end{equation}
 satisfies 
\begin{equation}
\left\Vert \hat{\mathbf{x}}-\mathbf{x}\right\Vert _{1}\leq C(\mathbf{g},r,\nu)\gamma^{2r}\delta,\label{eq:h_tight}
\end{equation}
where $\gamma:=\max\left\{ N\sigma,\varepsilon^{-1}\right\} $ and
\begin{equation}
\begin{split}
C(\mathbf{g},r,\nu):=&4^{r+1}\left(2^{r}-1\right)\left(\frac{\mathbf{g}\left(0\right)}{\beta}\right)^{r} \left(C_{0}\left(1+\frac{\pi^{2}}{6\nu^{2}}\right)\right)^{r-1}\\ &\cdot \left(\frac{6\nu^{2}}{3\mathbf{g}\left(0\right)\nu^{2}-2\pi^{2}C_{0}}\right)^{r}.\label{eq:C_rg}
\end{split}
\end{equation}
\end{thm}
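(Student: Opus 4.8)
The plan is to use the dual-certificate method, carrying the $\ell_{1}/\ell_{\infty}$ duality of \cite{bendorySOP} over to the non-negativity constraint in the spirit of the positive super-resolution analysis of \cite{morgenshtern2014stable}. Set $\mathbf{h}:=\hat{\mathbf{x}}-\mathbf{x}$ and $T:=\mathrm{supp}(\mathbf{x})$. Feasibility of $\hat{\mathbf{x}}$ and of $\mathbf{x}$ gives $\Vert\mathbf{g}\ast\mathbf{h}\Vert_{1}\le 2\delta$, optimality gives $\Vert\hat{\mathbf{x}}\Vert_{1}\le\Vert\mathbf{x}\Vert_{1}$, and $\hat{\mathbf{x}}\ge 0$ then forces the error off $T$ to be controlled by the error on $T$. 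The first step is to record the non-negative analogue of Lemma~\ref{lemma:q}: if there is a dual certificate $\mathbf{q}=\mathbf{g}\ast\mathbf{u}$ with $\mathbf{q}[k]=1$ for $k\in T$ and $\mathbf{q}[k]\le 1-\theta_{0}$ for $k\notin T$ (for some $\theta_{0}>0$), together with a family of \emph{localized} certificates supported near each support point, then pairing $\mathbf{h}$ with these certificates and invoking $\langle\mathbf{g}\ast\mathbf{h},\mathbf{u}\rangle\le 2\delta\Vert\mathbf{u}\Vert_{\infty}$ (using that $\mathbf{g}$ is even) produces $\Vert\hat{\mathbf{x}}-\mathbf{x}\Vert_{1}$ of order $\delta\,\max\Vert\mathbf{u}\Vert_{\infty}/\theta_{0}$. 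Everything then reduces to building these certificates with $\max\Vert\mathbf{u}\Vert_{\infty}/\theta_{0}$ of order $C(\mathbf{g},r,\nu)\gamma^{2r}$.

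The heart of the argument is this certificate construction, and it is where the Rayleigh regularity $r$ and the factor $\gamma^{2r}$ enter. The relevant geometry is that any window of length $\nu\sigma$ contains at most $r$ elements of $T$, so near each support point only $O(r)$ others interfere, while all the rest are handled by decay. I would pin $\mathbf{q}$ to the value $1$ at the $\le r$ support points inside such a window, writing it locally as a combination of shifts of $\mathbf{g}$ and of low-order difference combinations of $\mathbf{g}$ built from neighbouring grid shifts; solving this interpolation means inverting a confluent-Vandermonde-type matrix whose nodes sit on the $1/N$-grid inside a window of width $\nu\sigma$. Each ``difference'' direction scales like $(\sigma N)^{2}$ because that is how a second grid difference of $\mathbf{g}(k/(\sigma N))$ compares to $\mathbf{g}^{(2)}$, and the corresponding pivot is bounded below via the strict concavity $\mathbf{g}^{(2)}(t)\le-\beta$ for $\vert t\vert\le\varepsilon$ (Definition~\ref{def:kernel}, part~3) relative to the peak $\mathbf{g}(0)$; this is the origin of the $\gamma^{2r}$ and of the $(\mathbf{g}(0)/\beta)^{r}$ factor. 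I would carry the construction out by induction on $r$: for $r=1$ it is essentially the positive certificate of \cite{morgenshtern2014stable}, where parts~2 and~3 of Definition~\ref{def:kernel} and the hypothesis $N\sigma>(1/2)^{1/(2r)+1}\sqrt{\beta/\mathbf{g}(0)}$ are exactly what makes $\mathbf{q}$ attain a strict local maximum at the support point; the inductive step reduces a configuration of $r$ clustered points to sub-problems with fewer points and recombines them, at the cost of the combinatorial factors $4$ and $2^{r}-1$, one more power of $\gamma^{2}$ and of $\mathbf{g}(0)/\beta$, and one more power of the tail constant below.

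Finally, the off-support bound $\mathbf{q}[k]\le 1-\theta_{0}$ (and the analogous bounds for the localized certificates) is checked using the decay hypothesis $\vert\mathbf{g}^{(\ell)}(t)\vert\le C_{\ell}/(1+t^{2})$ and the unimodality $\mathbf{g}(t)<\mathbf{g}(\varepsilon)$ for $\vert t\vert>\varepsilon$: summing $C_{0}/(1+t^{2})$ over $\nu\sigma$-separated shifts yields the series $\sum_{n\ge 1}n^{-2}=\pi^{2}/6$, which accounts for the $\pi^{2}/(6\nu^{2})$ terms, and, once one imposes $3\mathbf{g}(0)\nu^{2}-2\pi^{2}C_{0}>0$ (the kernel peak must dominate the aggregated tails --- this is how $\nu$ is chosen), for the factor $(6\nu^{2}/(3\mathbf{g}(0)\nu^{2}-2\pi^{2}C_{0}))^{r}$. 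Collecting the per-window estimates, inserting $\max\Vert\mathbf{u}\Vert_{\infty}$ and $\theta_{0}$ into the duality bound of the first step, and bookkeeping constants yields (\ref{eq:h_tight}) with $C(\mathbf{g},r,\nu)$ as in (\ref{eq:C_rg}); letting $\delta\to 0$ gives the noise-free exactness statement. The main obstacle is exactly the inductive certificate construction: quantifying the stability of interpolation at $r$ clustered, grid-confined nodes --- pinning down the precise $\gamma^{2r}$ dependence and the recursion defining the constant --- while simultaneously keeping $\mathbf{q}$ strictly below $1$ everywhere off $T$ and controlling the interference between overlapping neighbourhoods through the kernel's tails.
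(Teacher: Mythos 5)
Your high-level framing (feasibility gives $\Vert\mathbf{g}\ast\mathbf{h}\Vert_{1}\le 2\delta$, optimality plus positivity control the error, and everything reduces to a certificate) is consistent with the paper, and your accounting of where the individual constants come from (the $\pi^{2}/(6\nu^{2})$ tail sums, the requirement $3\mathbf{g}(0)\nu^{2}-2\pi^{2}C_{0}>0$, the $\beta$ versus $\mathbf{g}(0)$ trade-off) matches the paper's lemmas. But the core of your plan --- building a single dual certificate that interpolates the value $1$ at all of $\mathrm{supp}(\mathbf{x})$, including $r$ clustered nodes, by inverting a confluent-Vandermonde-type system and inducting on $r$ --- is not what the paper does, and it is exactly the step you yourself flag as the ``main obstacle'' and leave unproved. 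Quantifying the stability of interpolation at $r$ clustered grid-confined nodes while keeping the certificate strictly below $1$ off the support is the hard part, and as written your argument has a genuine gap there; it is not clear the construction can be carried out in the form you describe.

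The paper's key idea, inherited from \cite{morgenshtern2014stable}, sidesteps clustered interpolation entirely. Because $\hat{\mathbf{x}}\ge 0$, the error $\mathbf{h}$ can be negative only on $\mathrm{supp}(\mathbf{x})$, so the set $\mathcal{N}:=\{k/N:\mathbf{h}[k]<0\}$ inherits the Rayleigh regularity and splits into $r$ subsets $\mathcal{N}_{i}$, each satisfying the \emph{separation} condition. For each $\mathcal{N}_{i}$ one invokes the already-available separated certificate $\tilde{\mathbf{q}}_{i}$ of Lemma \ref{lemma:q} (nonnegative, equal to $1$ on $\mathcal{N}_{i}$, quadratically bounded away from $1$ elsewhere), and the certificate for the clustered problem is the \emph{product} $\mathbf{q}=\prod_{i=1}^{r}(1-\tilde{\mathbf{q}}_{i})-\rho$, which matches the sign of $\mathbf{h}$ everywhere; the factor $\gamma^{2r}$ arises because each factor $1-\tilde{\mathbf{q}}_{i}$ is only of size $\beta/(4\mathbf{g}(0)\gamma^{2})$ at the nearest off-$\mathcal{N}_i$ grid point, not from Vandermonde conditioning. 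Two further ingredients you omit are then essential: the inequality $\sum_{k}\mathbf{h}[k]\le 0$ (from $\Vert\hat{\mathbf{x}}\Vert_{1}\le\Vert\mathbf{x}\Vert_{1}$ and positivity), which disposes of the constant term when the product is expanded; and the fact that $\mathbf{q}$ is \emph{not} of the form $\mathbf{g}\ast\mathbf{u}$, so your one-line duality bound $\langle\mathbf{g}\ast\mathbf{h},\mathbf{u}\rangle\le 2\delta\Vert\mathbf{u}\Vert_{\infty}$ does not apply --- the paper instead bounds each of the $2^{r}-1$ cross terms separately (Lemma \ref{claim:qh}), pairing the pure-$\mathbf{g}$ products against $\mathbf{g}\ast\mathbf{h}$ and absorbing the derivative terms, of size $c^{*}\nu^{-4}\Vert\mathbf{h}\Vert_{1}$, into the left-hand side. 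Replacing your certificate construction with this product construction applied to $\mathcal{N}$ is what closes the argument.
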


\begin{rem}
For sufficiently large $\nu$ and $N$, the recovery error can be written as 
\begin{equation*}
\left\Vert \hat{\mathbf{x}}-\mathbf{x}\right\Vert _{1}\leq \tilde{C}(\mathbf{g},r)(N\sigma)^{2r}\delta, 
\end{equation*}
where $\tilde{C}(\mathbf{g},r)$ is a constant that depends only on the kernel $\mathbf{g}$ and $r$.
\end{rem}

In order to extend Theorem \ref{th:main} to the two-dimensional case, we
present the equivalent of Definitions \ref{def:rayleigh} and \ref{def:kernel}
to two-dimensional signals. Notice that the two-dimensional definition of Rayleigh regularity is not a direct extension
of Definition \ref{def:rayleigh} and is quite less intuitive. In order to prove Theorems \ref{th:main}
and \ref{thm:main2}, we assume that the support of the signal  could be
presented as a union of $r$ non-intersecting subsets, which satisfy
the separation conditions of  (\ref{eq:1dseparation}) and (\ref{eq:2dseparation}), respectively. In the one-dimensional case, this property is
implied directly from Definition \ref{def:rayleigh}. However, this
property is not guaranteed by the two-dimensional extension of Definition
\ref{def:rayleigh}. See Figure \ref{fig:counter_exp} for a simple
counter-example. Therefore, in the two-dimensional case the Rayleigh-regularity of a signal is defined as follows:

\begin{defn}
\label{def:2Dray}\cite{morgenshtern2014stable} We say that the set
$\mathcal{P}\subset\left\{ k_{1}/N,k_{2}/N\right\} _{k_{1},k_{2}\in\mathbb{Z}}\subset\mathbb{R}^{2}$
is Rayleigh-regular with parameters $(d,r)$ and write $\mathcal{P}\in\mathcal{R}_{2}^{idx}(d,r)$
if it can be presented as a union of $r$ subsets $\mathcal{P}\mathcal{=}\mathcal{P}_{1}\cup\dots\cup\mathcal{P}_{r}$
that are not intersecting and satisfy the minimum separation constraint
(\ref{eq:2dseparation}). Namely, 
\begin{itemize}
\item for all $1\leq i<j\leq r$, $\mathcal{P}_{i}\cap\mathcal{P}_{j}=\emptyset$,
\item for all $1\leq i\leq r$, $\mathcal{P}{}_{i}$ satisfies: for all
square subsets $\mathcal{D}\in\mathbb{R}^{2}$ of side length $d \times d$,
$\left|\mathcal{P}{}_{i}\cap\mathcal{D}\right|\leq1$. \end{itemize}
\end{defn}

\begin{figure}
\begin{center}
\includegraphics[scale=0.4]{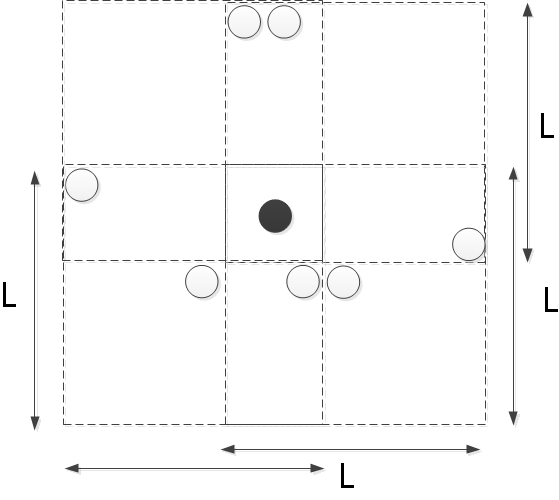}\end{center}\caption{\label{fig:counter_exp}
This figure presents an example for a set of 8 points that cannot be decomposed into 4 non-intersecting
subsets that meet the separation condition as in (\ref{eq:2dseparation}), although each resolution cell contains at most 4 points.  Consider a square resolution cell (in $\ell_\infty$ sense) of size $L\times L$. Note that indeed each resolution cell contains at most 4 points. Nonetheless,  this set of points cannot be described as 4 non-intersecting sets that satisfy the separation condition. Specifically, the distance (in $\ell_{\infty}$ norm) of the black point is smaller
than $L$ from the other 7 points, meaning it has to be in a separate subset from the others.
On the other hand, there is no
triplet of points that comprises a legal subset. Therefore, the property of
Definition \ref{def:2Dray} is not a consequence of the two-dimensional
version of Definition \ref{def:rayleigh}. }
\end{figure}

A two-dimensional non-negative admissible kernel is defined as follows:
\begin{defn}
\label{def:kernel2d}We say that $\mathbf{g_{2}}$ is a two-dimensional {non-negative
admissible }kernel if $\mathbf{g_{2}}\left(t_{1},t_{2}\right)\geq0$ for all
$\left(t_{1},t_{2}\right)\in\mathbb{R}^{2}$ and it has the following
properties:
\begin{enumerate}
\item $\mathbf{g_{2}}\in\mathcal{C}^{3}\left(\mathbb{R}^{2}\right)$ and 
\[
\mathbf{g_{2}}\left(t_{1},t_{2}\right)=\mathbf{g_{2}}\left(-t_{1},t_{2}\right)=\mathbf{g_{2}}\left(t_{1},-t_{2}\right)=\mathbf{g_{2}}\left(-t_{1},-t_{2}\right).
\]

\item \underline{Global property}: There exist constants $C_{\ell_{1},\ell_{2}}>0$
such that $\left|\mathbf{g_{2}}^{(\ell_{1},\ell_{2})}\left(t_{1},t_{2}\right)\right|\le\frac{C_{\ell_{1},\ell_{2}}}{\left(1+t_{1}^{2}+t_{2}^{2}\right)^{3/2}}$,
for $\ell_{1}+\ell_{2}\leq3$, where $\mathbf{g_{2}}^{(\ell_{1},\ell_{2})}\left(t_{1},t_{2}\right):=\frac{\partial^{\ell_{1}}\partial^{\ell_{2}}}{\partial t_{1}^{\ell_{1}}\partial t_{2}^{\ell_{2}}}\mathbf{g_{2}}\left(t_{1},t_{2}\right)$.
\item \underline{Local property}: There exist constants $\beta,\varepsilon>0$
such that

\begin{enumerate}
\item $\mathbf{g_{2}}\left(t_{1},t_{2}\right)<\mathbf{g_{2}}(\varepsilon,0)$ for all $\left(t_{1},t_{2}\right)$
satisfying $|t_{1}|>\varepsilon$, and $\mathbf{g_{2}}\left(t_{1},t_{2}\right)<\mathbf{g_{2}}(0,\varepsilon)$
for all $\left(t_{1},t_{2}\right)$ satisfying $|t_{2}|>\varepsilon$.
\item $\mathbf{g_{2}}^{(2,0)}\left(t_{1},t_{2}\right),\mathbf{g_{2}}^{(0,2)}\left(t_{1},t_{2}\right)<-\beta$
for all $\left(t_{1},t_{2}\right)$ satisfying $|t_{1}|,|t_{2}|\leq\epsilon$. 
\end{enumerate}
\end{enumerate}
\end{defn}

Equipped with the appropriate definitions of Rayleigh regularity and non-negative admissible kernel, we are ready to state our main theorem for the two-dimensional case. The theorem is  proved in Section \ref{sec:Proof_th2}.
\begin{thm}
\label{thm:main2} Consider the model (\ref{eq:model2}) for a non-negative
two-dimensional admissible kernel $\mathbf{g_{2}}$ as defined in Definition \ref{def:kernel2d}. Then, there exists $\nu>0$
such that if $\mathcal{\mbox{supp\ensuremath{\left(\ensuremath{\mathbf{x_{2}}}\right)}}}\in\mathcal{R}_{2}^{idx}\left(\nu\sigma,r\right)$,
the solution $\hat{\mathbf{x}}_\mathbf{2}$ of the convex problem 
\begin{equation}
\min_{\tilde{\mathbf{x}}}\quad\left\Vert \tilde{\mathbf{x}}\right\Vert _{1}\quad\mbox{subject to}\quad\left\Vert \mathbf{y}-\mathbf{g_{2}}\ast\tilde{\mathbf{x}}\right\Vert_1 \leq\delta,\thinspace\tilde{\mathbf{x}}\geq0,\label{eq:cvx2}
\end{equation}
 satisfies (for sufficiently large $N$ and $\nu$) 
\[
\left\Vert \hat{\mathbf{x}}_{\mathbf{2}}-\mathbf{x_{2}}\right\Vert _{1}\leq C_{2}\left(\mathbf{g_2},r\right)\left(N\sigma\right)^{2r}\delta,
\]
 where $C_{2}\left(\mathbf{g_2},r\right)$ is a constant which depends on the
kernel $\mathbf{g_2}$ and the Rayleigh regularity $r$.
\end{thm}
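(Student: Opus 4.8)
The plan is to mirror, step by step, the proof of Theorem~\ref{th:main} carried out in Section~\ref{sec:Proof_main_th}, substituting each one-dimensional ingredient by its two-dimensional counterpart; the genuinely new work is confined to the construction of the dual certificate. Recall that the argument behind Theorem~\ref{th:main} reduces robust recovery via the convex program to the existence of a \emph{dual certificate} (cf.\ Lemma~\ref{lemma:q}): a function $\mathbf{q}$, built from shifts of $\mathbf{g_2}$ and its partial derivatives, that equals $1$ on $\mbox{supp}(\mathbf{x_2})$, satisfies $\mathbf{q}[\mathbf{k}]<1$ off the support with a quantitative quadratic lower bound on $1-\mathbf{q}$ near each spike (this is where property~3(b) of Definition~\ref{def:kernel2d} and the curvature constant $\beta$ enter), and whose defining coefficient vector has $\ell_\infty$ norm $M$. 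Feeding such a $\mathbf{q}$ into the same primal--dual inequality used for Theorem~\ref{th:main} --- pairing $\mathbf{q}$ with $\hat{\mathbf{x}}_{\mathbf{2}}-\mathbf{x_2}$, bounding $\|\mathbf{g_2}\ast(\hat{\mathbf{x}}_{\mathbf{2}}-\mathbf{x_2})\|_1\le 2\delta$ from feasibility of $\mathbf{x_2}$ and optimality of $\hat{\mathbf{x}}_{\mathbf{2}}$, and using that the positivity constraint allows $\mathbf{q}$ to be \emph{one-sided} ($\mathbf{q}\le 1$ rather than $|\mathbf{q}|\le 1$), so that no separation of $\mbox{supp}(\mathbf{x_2})$ is required --- yields $\|\hat{\mathbf{x}}_{\mathbf{2}}-\mathbf{x_2}\|_1\lesssim M\delta$. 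The whole task is thus to construct $\mathbf{q}$ with $M=O\!\big((N\sigma)^{2r}\big)$.

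For the construction we invoke Definition~\ref{def:2Dray} and write $\mbox{supp}(\mathbf{x_2})=\mathcal{P}_1\cup\dots\cup\mathcal{P}_r$, each $\mathcal{P}_i$ obeying the $\nu\sigma$-separation~(\ref{eq:2dseparation}). For a single separated set $\mathcal{P}_i$ one builds, as for the separated case in \cite{bendorySOP} but now using the genuinely two-dimensional admissibility conditions of Definition~\ref{def:kernel2d}, a superposition of $\mathbf{g_2}$ and its first partial derivatives centered at the points of $\mathcal{P}_i$, with coefficients solving an interpolation system whose near-diagonal $2\times2$ blocks are controlled by $\mathbf{g_2}(0,0)$ and $\beta$ and whose off-diagonal entries are controlled, via property~2, by the $(1+t_1^2+t_2^2)^{-3/2}$ decay --- so the system is strictly diagonally dominant once $N\sigma$ exceeds a threshold of the same flavour as in Theorem~\ref{th:main}. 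Since $\mathbf{g_2}$ need not be separable, this cannot be reduced to a tensor product of one-dimensional certificates, but the per-coordinate form of properties~3(a)--3(b) is exactly what supplies, along each coordinate direction, the negative second derivative that generates the quadratic margin. The $r$ single-set certificates are then merged by an inductive \emph{peeling} scheme: at stage $j$ one perturbs the certificate already built over $\mathcal{P}_1\cup\dots\cup\mathcal{P}_{j-1}$ so that it additionally interpolates the value $1$ on $\mathcal{P}_j$ with vanishing gradient there, again by solving a well-conditioned linear system; each stage inflates the coefficient norm by a factor $O\!\big((N\sigma)^2\big)$, so after $r$ stages $M=O\!\big((N\sigma)^{2r}\big)$, yielding the constant $C_2(\mathbf{g_2},r)$.

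It then remains to check that the merged $\mathbf{q}$ stays below $1$ away from $\mbox{supp}(\mathbf{x_2})$, with the required quadratic lower bound on $1-\mathbf{q}$ near each spike and a uniform strict bound far away. This is done exactly as in one dimension: property~3(a) governs the intermediate range, while the $(1+t_1^2+t_2^2)^{-3/2}$ decay of property~2 dominates the tails of all the bumps summed over the (dense but Rayleigh-regular) support; the hypotheses ``$N$ and $\nu$ sufficiently large'' are consumed here and in the conditioning estimates above.

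I expect the main obstacle to be the certificate construction in the non-separable two-dimensional setting: showing that the interpolation matrices --- both for a single separated subset and at each peeling stage --- are invertible with the claimed conditioning using only Definition~\ref{def:kernel2d} requires a careful two-dimensional diagonal-dominance / Schur-complement estimate, after which one must track the constants through the $r$-fold induction without spoiling the clean $(N\sigma)^{2r}$ scaling. A related point, and the reason the correct hypothesis is Definition~\ref{def:2Dray} rather than a naive two-dimensional version of Definition~\ref{def:rayleigh}, is that the peeling argument genuinely needs $\mbox{supp}(\mathbf{x_2})$ to split into $r$ separated pieces --- which, as Figure~\ref{fig:counter_exp} shows, is strictly stronger than bounding the number of points per resolution cell.
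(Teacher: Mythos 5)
Your high-level framing --- reduce robust recovery to a dual certificate and then construct it --- is reasonable, but the construction you propose is not the one that works here, and it conceals the real difficulty. You build a \emph{single} function interpolating the value $1$ on the whole (clustered) support by an $r$-stage peeling scheme, with each stage solving an interpolation system and inflating the coefficient norm by $O\bigl((N\sigma)^2\bigr)$, and you then assert $\Vert\hat{\mathbf{x}}_{\mathbf{2}}-\mathbf{x_2}\Vert_1\lesssim M\delta$ with $M$ the coefficient norm. Neither half is justified. When up to $r$ spikes sit inside a single $\nu\sigma$ resolution cell, the interpolation matrix with entries $\mathbf{g_2}\bigl((\mathbf{t}_i-\mathbf{t}_j)/\sigma\bigr)$ and its derivatives is nearly singular (off-diagonal entries approach the diagonal ones), so no diagonal-dominance or Schur-complement argument applies; you give no reason why each peeling stage costs exactly $O\bigl((N\sigma)^2\bigr)$, nor why a function with coefficients of size $(N\sigma)^{2r}$ would still satisfy $\mathbf{q}\leq1$ globally with a quadratic margin near each spike --- this is precisely the regime in which such certificates degenerate. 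Moreover, the inequality $\Vert\mathbf{h}\Vert_1\lesssim M\delta$ is not the mechanism of the one-dimensional proof you claim to be mirroring: there the coefficient norms stay $O(1)$ in $N$ (see (\ref{eq:a})--(\ref{eq:b}) and Lemma \ref{lemma:q2}), and the factor $(N\sigma)^{2r}$ enters as $\rho^{-1}$, the reciprocal of the minimal modulus of the certificate on the grid away from the negative part of the error.

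The idea you are missing is the product construction. The paper never certifies the full clustered support. It sets $\mathbf{h}=\hat{\mathbf{x}}_{\mathbf{2}}-\mathbf{x_2}$, observes that the set $\mathcal{N}_2$ where $\mathbf{h}<0$ is contained in $\mbox{supp}(\mathbf{x_2})$ and hence, by Definition \ref{def:2Dray}, splits into $r$ non-intersecting separated subsets $\mathcal{N}_{i,2}$; for each one it takes the \emph{separated-case} certificate $\tilde{\mathbf{q}}_{i,2}$ of Lemma \ref{lemma:q2} (this is where \cite{bendorySOP} is used, and only in the well-conditioned separated regime), and forms $\mathbf{q_2}=\prod_{i=1}^{r}\bigl(1-\tilde{\mathbf{q}}_{i,2}\bigr)-\rho$ as in (\ref{eq:q2}). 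By Lemma \ref{lem:q_prod2} this function has the same sign as $\mathbf{h}$ everywhere with modulus at least $\rho\geq\frac{1}{2}\bigl(c_1/(N\sigma)^2\bigr)^{r}$, giving the lower bound $\rho\Vert\mathbf{h}\Vert_1\leq\langle\mathbf{q_2},\mathbf{h}\rangle$; expanding the product into its $3^{r}-1$ cross terms and bounding each via Lemma \ref{lem:3.4_2}, together with $\sum_{\mathbf{k}}\mathbf{h}[\mathbf{k}]\leq0$ (the analogue of (\ref{eq:nsp}), which is where the positivity constraint in (\ref{eq:cvx2}) is actually consumed --- not merely in making the certificate ``one-sided''), gives the upper bound $\lesssim\delta+\nu^{-6}\Vert\mathbf{h}\Vert_1$, and dividing by $\rho$ produces the $(N\sigma)^{2r}$ scaling. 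Without this product-and-sign argument, or a genuinely new proof that your peeled certificate exists with the stated properties, the proposal does not establish the theorem.
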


To conclude this section, we summarize the contribution of this paper and compare it to the relevant previous works. Particularly, we stress the chief differences from \cite{morgenshtern2014stable,bendorySOP} which served as inspiration for this work.

\begin{itemize}
\item This work deviates from \cite{morgenshtern2014stable} in two important aspects. First, our stability results is much stronger than those in \cite{morgenshtern2014stable}. Particularly, our main results hold for signals with $r$ spikes within a resolution cell. In contrast, the main theorems of   \cite{morgenshtern2014stable} require signals with  $r$ spikes within $r$ resolution cells. Second, our formulation is not restricted to kernels with finite bandwidth and, in this manner, can be seen as a generalization of  \cite{morgenshtern2014stable}. This generalization is of particular interest as many kernels in practical applications are not band-limited.

\item In \cite{bendorySOP}, it is proven that robust recovery from general stream of pulses (not necessarily non-negative) is possible if the delays are not clustered. Here, we show that the separation is unnecessary in the positive case and can be replaced by the notion of Raleigh regularity. This notion quantifies the density of the signal's support.  

\item We derive strong stability guarantees compared to  parametric methods, such as Prony method, matrix pencil and MUSIC. Nevertheless, we heavily rely  on the positiveness of signal and the density of the delays,
whereas the parametric methods do not have these restrictions. We
also mention that several previous works suggested noise-free results
for non-negative signals in similar settings, however they do not
derive stability guarantees \cite{bendory2013exact,de2012exact,fuchs2005sparsity,schiebinger2015superresolution}. In \cite{denoyelle2015support} it was proven that the necessary separation between the delays drops to zero for sufficiently low noise level.

\end{itemize}

\section{\label{sec:Proof_main_th}Proof of Theorem \ref{th:main}}

The proof follows the outline of \cite{morgenshtern2014stable} and borrows  constructions from \cite{bendorySOP}.
Let $\hat{\mathbf{x}}$ be s solution of (\ref{eq:cvx}) and set $\mathbf{h}:=\hat{\mathbf{x}}-\mathbf{x}$.
Observe that by (\ref{eq:cvx}) $\left\Vert \mathbf{h}\right\Vert _{1}$ is
finite since $\left\Vert \mathbf{h}\right\Vert _{1}\leq\left\Vert \hat{\mathbf{x}}\right\Vert _{1}+\left\Vert \mathbf{x}\right\Vert _{1}\leq2\left\Vert \mathbf{x}\right\Vert _{1}$.
The proof relies on some fundamental results from \cite{bendorySOP}
(particularly, see Proposition 3.3 and Lemmas 3.4 and 3.5) which are
summarized by the following lemma: 
\begin{lem}
\label{lemma:q} Let $\mathbf{g}$ be a non-negative admissible kernel as defined in Definition \ref{def:kernel} and
suppose that $\mathcal{T}:=\left\{ t_{m}\right\} \in\mathcal{R}^{idx}\left(\nu\sigma,1\right)$.
Then, there exists a kernel-dependent separation constant $\nu>0$ (see (\ref{eq:1dseparation}))
and a set of coefficients $\left\{ a_{m}\right\} $ and $\left\{ b_{m}\right\} $
such there exists an associated function of the form 
\begin{equation}
\tilde{\mathbf{q}}(t)=\sum_{m}a_{m}\mathbf{g}\left(\frac{t-t_{m}}{\sigma}\right)+b_{m}\mathbf{g}^{(1)}\left(\frac{t-t_{m}}{\sigma}\right),\label{eq:q_form}
\end{equation}
which satisfies: 
\begin{eqnarray*}
\tilde{\mathbf{q}}\left(t_{m}\right) & = & 1,\quad t_{m}\in\mathcal{T},\\
\tilde{\mathbf{q}}\left(t\right) & \leq & 1-\frac{\beta\left(t-t_{m}\right)^{2}}{4\mathbf{g}(0)\sigma^{2}},\quad\left|t-t_{m}\right|\leq\varepsilon\sigma,\thinspace t_{m}\in\mathcal{T},\\
\tilde{\mathbf{q}}\left(t\right) & < & 1-\frac{\beta\varepsilon^{2}}{4\mathbf{g}(0)},\quad\left|t-t_{m}\right|>\varepsilon\sigma,\thinspace\forall t_{m}\in\mathcal{T},\\
\tilde{\mathbf{q}}\left(t\right) & \geq & 0  ,\quad t\in\mathbb{R},
\end{eqnarray*}
where  $\varepsilon$ and $\beta$ are the constants associated with 
$\mathbf{g}$. Furthermore,
\begin{eqnarray}
\left\Vert \mathbf{a}\right\Vert _{\infty} & := & \underset{m}{\max}\left\vert a_{m}\right\vert \leq\frac{3\nu^{2}}{3\mathbf{g}\left(0\right)\nu^{2}-2\pi^{2}C_{0}},\label{eq:a}\\ \label{eq:b}
\left\Vert \mathbf{b}\right\Vert _{\infty} & := & \underset{m}{\max}\left\vert b_{m}\right\vert \\ &\leq & \frac{\pi^{2}C_{1}}{\left(3\left\vert \mathbf{g}^{\left(2\right)}\left(0\right)\right\vert \nu^{2}-\pi^{2}C_{2}\right)\left(3\mathbf{g}\left(0\right)\nu^{2}-2\pi^{2}C_{0}\right)}\nonumber.
\end{eqnarray}
\end{lem}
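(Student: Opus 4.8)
\textit{Proof proposal.}
Lemma \ref{lemma:q} is essentially a repackaging of the dual-certificate construction of \cite{bendorySOP}, so the plan is to recall that construction and explain why it yields the four pointwise properties together with the coefficient bounds. I would build $\tilde{\mathbf{q}}$ from the ansatz (\ref{eq:q_form}) and fix the unknowns $\{a_m\},\{b_m\}$ by imposing the interpolation conditions $\tilde{\mathbf{q}}(t_m)=1$ and $\tilde{\mathbf{q}}^{(1)}(t_m)=0$ for every $t_m\in\mathcal{T}$; the first-derivative conditions are exactly what forces each $t_m$ to be a strict interior maximum, which is why the $\mathbf{g}^{(1)}$ atoms are present. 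Writing this as a block linear system whose entries are $\mathbf{g},\mathbf{g}^{(1)},\mathbf{g}^{(2)}$ evaluated at the scaled node gaps $(t_i-t_j)/\sigma$, the key point is that by the global decay $|\mathbf{g}^{(\ell)}(t)|\le C_\ell/(1+t^2)$ and the separation $|t_i-t_j|\ge\nu\sigma$ coming from $\mathcal{T}\in\mathcal{R}^{idx}(\nu\sigma,1)$, each off-diagonal row sum is bounded by a constant multiple of $\sum_{k\ge1}C_\ell/(k\nu)^2=\tfrac{\pi^2}{6}\cdot C_\ell/\nu^2$. Hence for $\nu$ large enough that the denominators in (\ref{eq:a})--(\ref{eq:b}) are positive, the system is diagonally dominant and invertible, and a Neumann-series / Schur-complement estimate gives both the existence of $\tilde{\mathbf{q}}$ and the stated bounds on $\|\mathbf{a}\|_\infty$ and $\|\mathbf{b}\|_\infty$; in particular $a_m$ is positive and close to $1/\mathbf{g}(0)$, while $b_m$ is of order $\nu^{-4}$.

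Granting the coefficient estimates, the four properties are verified regime by regime. For $|t-t_m|\le\varepsilon\sigma$, Taylor-expand $\tilde{\mathbf{q}}$ about $t_m$: the zeroth- and first-order terms are $1$ and $0$ by construction, so $\tilde{\mathbf{q}}(t)=1+\tfrac12\tilde{\mathbf{q}}^{(2)}(\xi)(t-t_m)^2$ for some $\xi$ between $t$ and $t_m$. The dominant contribution to $\tilde{\mathbf{q}}^{(2)}(\xi)$ is $\sigma^{-2}a_m\mathbf{g}^{(2)}((\xi-t_m)/\sigma)$, which is $\le-\beta a_m/\sigma^2$ since $\mathbf{g}^{(2)}\le-\beta$ on $[-\varepsilon,\varepsilon]$; using $a_m\gtrsim 1/\mathbf{g}(0)$ and absorbing the $O(\nu^{-2})$ cross-terms from the other atoms and the $O(\nu^{-4})$ term from $b_m\mathbf{g}^{(3)}$, one obtains $\tilde{\mathbf{q}}^{(2)}(\xi)\le-\beta/(2\mathbf{g}(0)\sigma^2)$, i.e. the parabola bound $\tilde{\mathbf{q}}(t)\le 1-\tfrac{\beta(t-t_m)^2}{4\mathbf{g}(0)\sigma^2}$. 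For $|t-t_m|>\varepsilon\sigma$ for all $m$, the local property $\mathbf{g}(t)<\mathbf{g}(\varepsilon)$ for $|t|>\varepsilon$ together with the summable tails of the remaining atoms forces $\tilde{\mathbf{q}}(t)<1-\tfrac{\beta\varepsilon^2}{4\mathbf{g}(0)}$. Finally, non-negativity holds because the positive-coefficient kernel part $\sum_m a_m\mathbf{g}((t-t_m)/\sigma)\ge0$ dominates the derivative correction $\sum_m b_m\mathbf{g}^{(1)}((t-t_m)/\sigma)$, whose size is controlled by $\|\mathbf{b}\|_\infty=O(\nu^{-4})$ and the decay of $\mathbf{g}^{(1)}$.

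The principal obstacle is bookkeeping rather than conceptual: one must pick the separation constant $\nu$ so that \emph{simultaneously} (i) the inversion denominators $3\mathbf{g}(0)\nu^2-2\pi^2C_0$ and $3|\mathbf{g}^{(2)}(0)|\nu^2-\pi^2C_2$ remain strictly positive, (ii) the $O(\nu^{-2})$ perturbations in the second-order Taylor remainder are small enough to preserve the curvature bound $-\beta/(2\mathbf{g}(0)\sigma^2)$, and (iii) the tail estimates in the ``far'' regime and in the non-negativity argument close. Tracking these with the explicit factor $\tfrac{\pi^2}{6\nu^2}$ rather than with generic $O(\cdot)$ constants is precisely what produces the sharp coefficient bounds (\ref{eq:a})--(\ref{eq:b}) that feed the constant $C(\mathbf{g},r,\nu)$ in Theorem \ref{th:main}. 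Since this construction, the admissible choice of $\nu$, and these bounds are already carried out in \cite{bendorySOP} (Proposition 3.3, Lemmas 3.4 and 3.5), the proof of Lemma \ref{lemma:q} ultimately reduces to citing and reassembling those statements.
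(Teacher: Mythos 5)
Your proposal matches the paper's treatment: the paper gives no independent proof of Lemma \ref{lemma:q}, importing it wholesale from \cite{bendorySOP} (Proposition 3.3 and Lemmas 3.4 and 3.5), which carry out exactly the construction you reconstruct --- interpolation conditions $\tilde{\mathbf{q}}(t_m)=1$, $\tilde{\mathbf{q}}^{(1)}(t_m)=0$, invertibility of the resulting block system by diagonal dominance under the separation $\nu\sigma$ with the $\pi^{2}C_\ell/(6\nu^{2})$ off-diagonal row sums, and the regime-by-regime Taylor and tail estimates. The only ingredient the paper adds beyond the citation is the non-negativity $\tilde{\mathbf{q}}\geq 0$, which it asserts in the remark following the lemma to be a direct corollary of $\mathbf{g}\geq 0$; your domination argument (the $a_m\mathbf{g}$ part controlling the $O(\nu^{-4})$-weighted $\mathbf{g}^{(1)}$ corrections) is, if anything, more explicit than what the paper offers for that step.
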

\begin{rem}
The non-negativity property, $\tilde{\mathbf{q}}\left(t\right)\geq0$ for all
$t\in\mathbb{R}$, does not appear in \cite{bendorySOP}, however,
it is a direct corollary of the non-negativity assumption that $\mathbf{g}(t)\geq0$
for all $t\in\mathbb{R}$.
\end{rem}

The interpolating function (\ref{eq:q_form}) also satisfies the following
property which will be needed in the proof:
\begin{lem}
\label{claim:qh} Let $\hat{\mathbf{x}}$ be s solution of (\ref{eq:cvx}) and set $\mathbf{h}:=\hat{\mathbf{x}}-\mathbf{x}$. Let $\left\{ \mathcal{T}_{i}\right\} _{i=1}^{r}$ be a  union of $r$ non-intersecting
sets  obeying $\mathcal{T}_{i}\in\mathcal{R}^{idx}\left(\nu\sigma,1\right)$
for all $i\in\left\{ 1,\dots,r\right\} $. For each set $\mathcal{T}_{i}$,
let $\tilde{\mathbf{q}}_{i}[k]:=\tilde{\mathbf{q}}_{i}\left(k/N\right)$, $k\in\mathbb{Z}$,
be an associated function, where $\tilde{\mathbf{q}_{i}}(t)$ is given in (\ref{eq:q_form}).
Then, for any sequence $\left\{ \alpha_{i}\right\} _{i=1}^{r}\in\left\{ 0,1\right\} $
we have 
\begin{equation} \label{eq:lemma3.4}
\begin{split}
&\sum_{k\in\mathbb{Z}}\prod_{i=1}^{r}\left(\tilde{\mathbf{q}}_{i}\left[k\right]\right)^{\alpha_{i}}\mathbf{h}[k]\\ &\leq\left(C_{0}\left(1+\frac{\pi^{2}}{6\nu^{2}}\right)\right)^{r-1}\left(\frac{6\nu^{2}}{3\mathbf{g}\left(0\right)\nu^{2}-2\pi^{2}C_{0}}\right)^{r}\delta \\ &+c^{*}\nu^{-4}\left\Vert \mathbf{h}\right\Vert _{1},
\end{split}
\end{equation}
for some constant $c^{*}>0$ that depends on the kernel $\mathbf{g}$.
\end{lem}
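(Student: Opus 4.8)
The plan is to bound the quantity $\sum_{k}\prod_{i}(\tilde{\mathbf{q}}_i[k])^{\alpha_i}\mathbf{h}[k]$ by exploiting the feasibility constraint $\|\mathbf{y}-\mathbf{g}\ast\hat{\mathbf{x}}\|_1\le\delta$ together with the structure of the interpolating functions $\tilde{\mathbf{q}}_i$ furnished by Lemma \ref{lemma:q}. The first step is to observe that since each $\tilde{\mathbf{q}}_i$ has the form (\ref{eq:q_form}), it is a (shifted) combination of $\mathbf{g}$ and $\mathbf{g}^{(1)}$, and therefore $\tilde{\mathbf{q}}_i$ (sampled at $k/N$) can be written as a discrete convolution of $\mathbf{g}$ with some sequence $\mathbf{w}_i$ supported near $\mathcal{T}_i$ — up to a discretization error that is $O((N\sigma)^{-1})$ in an appropriate norm. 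Concretely, $\tilde{\mathbf{q}}_i \approx \mathbf{g}\ast\mathbf{v}_i$ where $\|\mathbf{v}_i\|_1$ is controlled by $\|\mathbf{a}_i\|_\infty$, $\|\mathbf{b}_i\|_\infty$ and the separation $\nu\sigma$, via the bounds (\ref{eq:a}) and (\ref{eq:b}) and the global decay property of $\mathbf{g}$.

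The second step is a telescoping/iterative argument on the product $\prod_{i=1}^r(\tilde{\mathbf{q}}_i[k])^{\alpha_i}$. Since each $\alpha_i\in\{0,1\}$, the product is really over the subset of indices where $\alpha_i=1$; write $P_r:=\prod_{i}(\tilde{\mathbf{q}}_i)^{\alpha_i}$ and peel off one factor at a time. Using $P_r = P_{r-1}\cdot\tilde{\mathbf{q}}_r^{\alpha_r}$ and the fact that $\tilde{\mathbf{q}}_r = \mathbf{g}\ast\mathbf{v}_r + \mathbf{e}_r$ with $\mathbf{e}_r$ the discretization residual, one writes
\begin{equation*}
\sum_k P_r[k]\mathbf{h}[k] = \sum_k P_{r-1}[k](\mathbf{g}\ast\mathbf{v}_r)[k]\mathbf{h}[k] + \sum_k P_{r-1}[k]\mathbf{e}_r[k]\mathbf{h}[k].
\end{equation*}
For the first term one moves the convolution onto $P_{r-1}\mathbf{h}$ (using that $\mathbf{g}$ is even) and invokes the feasibility bound: $\|\mathbf{g}\ast\mathbf{h}\|_1 = \|(\mathbf{y}-\mathbf{g}\ast\hat{\mathbf{x}}) - (\mathbf{y}-\mathbf{g}\ast\mathbf{x})\|_1 \le 2\delta$, while $\|P_{r-1}\|_\infty\le 1$ because each $\tilde{\mathbf{q}}_i\in[0,1]$ by Lemma \ref{lemma:q}. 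This contributes a factor of $\|\mathbf{v}_r\|_1$, i.e. one copy of $\bigl(C_0(1+\tfrac{\pi^2}{6\nu^2})\bigr)\cdot\bigl(\tfrac{6\nu^2}{3\mathbf{g}(0)\nu^2-2\pi^2 C_0}\bigr)$, times $\delta$. For the residual term, $\|\mathbf{e}_r\|_\infty$ is $O(\nu^{-4}(N\sigma)^{-1})$ or similar, which together with $\|P_{r-1}\|_\infty\le1$ gives a contribution bounded by $c^*\nu^{-4}\|\mathbf{h}\|_1$. Iterating this $r$ times over the factors produces the claimed geometric factor $\bigl(C_0(1+\tfrac{\pi^2}{6\nu^2})\bigr)^{r-1}\bigl(\tfrac{6\nu^2}{3\mathbf{g}(0)\nu^2-2\pi^2 C_0}\bigr)^{r}\delta$ plus the residual $c^*\nu^{-4}\|\mathbf{h}\|_1$; note the exponent $r-1$ rather than $r$ arises because at the last peeling step one uses $\|\mathbf{g}\ast\mathbf{h}\|_1\le2\delta$ directly without another $\|\mathbf{v}\|_1$ factor, and the $2$ gets absorbed.

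The main obstacle I expect is the careful bookkeeping of the discretization errors in the recursion: each time a factor $\tilde{\mathbf{q}}_i$ is replaced by $\mathbf{g}\ast\mathbf{v}_i$, the residual $\mathbf{e}_i$ must be controlled uniformly, and one must ensure these residuals accumulate only into the single $c^*\nu^{-4}\|\mathbf{h}\|_1$ term (with $c^*$ depending on $\mathbf{g}$ and $r$ but not blowing up the stated form) rather than interacting multiplicatively with the $\delta$ terms. This requires that at every stage the partial product $P_{j}$ stays bounded by $1$ in sup-norm — which is exactly guaranteed by the non-negativity and upper bound $\tilde{\mathbf{q}}_i\le1$ in Lemma \ref{lemma:q}, and is precisely why the non-negativity refinement noted in the Remark after Lemma \ref{lemma:q} is needed here. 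A secondary technical point is justifying the interchange of summation and convolution (Fubini for $\ell_1$ sequences), which is legitimate since $\|\mathbf{h}\|_1<\infty$ and $\mathbf{g}$ is summable by the global decay property.
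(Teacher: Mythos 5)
Your overall plan---split each $\tilde{\mathbf{q}}_i$ into a part that can be paired with the feasibility constraint plus a small residual charged against $\left\Vert \mathbf{h}\right\Vert _{1}$---is in the right spirit, but two of your key steps do not hold. First, the step ``move the convolution onto $P_{r-1}\mathbf{h}$ and invoke $\left\Vert \mathbf{g}\ast\mathbf{h}\right\Vert _{1}\leq2\delta$ together with $\left\Vert P_{r-1}\right\Vert _{\infty}\leq1$'' is invalid: after moving the convolution you are facing $\mathbf{g}\ast\left(P_{r-1}\mathbf{h}\right)$, and pointwise multiplication by $P_{r-1}$ does not commute with convolution by $\mathbf{g}$, so the feasibility bound on $\mathbf{g}\ast\mathbf{h}$ gives no control on $\mathbf{g}\ast\left(P_{r-1}\mathbf{h}\right)$ (take $P_{r-1}$ aligned with the sign pattern of $\mathbf{h}$ to see that the latter can be of order $\left\Vert \mathbf{h}\right\Vert _{1}$ while the former is tiny). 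The paper orders the estimates differently precisely to avoid this: it expands the product of the $r$ sums (\ref{eq:q_form}) into at most $2^{r}$ terms; in the all-$\mathbf{g}$ term it first bounds the $r-1$ ``spectator'' factors $\sum_{k_{m_{j}}}\mathbf{g}\left[k-k_{m_{j}}\right]$ uniformly in $k$ by $2C_{0}\left(1+\frac{\pi^{2}}{6\nu^{2}}\right)$ via (\ref{eq:bound_g1h}) (a consequence of the global decay and the $\nu\sigma$ separation inside each $\mathcal{T}_{j}$), and only then pairs the one remaining factor with $\sum_{n}\left|\sum_{k}\mathbf{g}\left[k-n\right]\mathbf{h}[k]\right|\leq2\delta$ from (\ref{eq:delta}); each of the $r$ factors also contributes one copy of $\left\Vert \mathbf{a}\right\Vert _{\infty}$ via (\ref{eq:a}), which is where $\left(2\left\Vert \mathbf{a}\right\Vert _{\infty}\right)^{r}=\left(\frac{6\nu^{2}}{3\mathbf{g}\left(0\right)\nu^{2}-2\pi^{2}C_{0}}\right)^{r}$ comes from. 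Relatedly, your quantity $\left\Vert \mathbf{v}_{i}\right\Vert _{1}=\sum_{m}\left|a_{m}\right|$ is \emph{not} controlled by $\left\Vert \mathbf{a}\right\Vert _{\infty}$: the number of spikes in $\mathcal{T}_{i}$ is unbounded, so the per-spike bounds (\ref{eq:a})--(\ref{eq:b}) must always be combined with the summability estimate (\ref{eq:bound_g1h}) and never upgraded to an $\ell_{1}$ bound on the coefficients.

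Second, there is no discretization error in this lemma. Since the nodes lie on the grid ($t_{m}=k_{m}/N$) and $\mathbf{g}[k]=\mathbf{g}\left(\frac{k}{\sigma N}\right)$, the sampled interpolant is exactly $\tilde{\mathbf{q}}_{i}[k]=\sum_{m}a_{m}\mathbf{g}\left[k-k_{m}\right]+b_{m}\mathbf{g}^{(1)}\left[k-k_{m}\right]$. The residual you call $\mathbf{e}_{i}$ is really the $\mathbf{g}^{(1)}$-component; it cannot be paired with the feasibility constraint (which involves $\mathbf{g}$ only), so each of the remaining $2^{r}-1$ terms of the expansion, all containing at least one $\mathbf{g}^{(1)}$ factor, is bounded crudely by $c_{0}\left\Vert \mathbf{a}\right\Vert _{\infty}^{\beta_{1}}\left\Vert \mathbf{b}\right\Vert _{\infty}^{\beta_{2}}\left\Vert \mathbf{h}\right\Vert _{1}$ with $\beta_{2}\geq1$, and the $\nu^{-4}$ in the statement comes from the bound (\ref{eq:b}) on $\left\Vert \mathbf{b}\right\Vert _{\infty}$, not from $N\sigma$ being large. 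Finally, the bound $0\leq\tilde{\mathbf{q}}_{i}\leq1$ is not what drives this lemma (it is used later, in Lemma \ref{lem:q_prod}); here the operative estimates are (\ref{eq:delta}), (\ref{eq:bound_g1h}), (\ref{eq:a}) and (\ref{eq:b}).
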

\begin{proof}
We begin by two preliminary calculations. First, we observe from (\ref{eq:model1})
and (\ref{eq:cvx}) that 
\begin{eqnarray}
\sum_{n\in\mathbb{Z}}\left|\sum_{k\in\mathbb{Z}}\mathbf{g}\left[k-n\right]\mathbf{h}[k]\right| & \leq & \left\Vert \mathbf{y}-\mathbf{g}\ast\hat{\mathbf{x}}\right\Vert _{1} \nonumber \\ &+&\left\Vert \mathbf{g}\ast \mathbf{x}-\mathbf{y}\right\Vert _{1} \nonumber\\ &\leq & 2\delta.\label{eq:delta}
\end{eqnarray}
Additionally, we can estimate for all $k\in\mathbb{Z}$ (see Section
3.4 in \cite{bendorySOP}) 
\[
\sum_{k_{m}\in\mathcal{T}_{i}}\frac{1}{1+\left(\frac{k-k_{m}}{N\sigma}\right)^2}<2\left(1+\frac{\pi^{2}}{6\nu^{2}}\right),
\]
and hence with the properties of admissible kernel as defined in Definition
\ref{def:kernel} we have for $\ell=0,1,$ 
\begin{eqnarray}
\left|\sum_{k_{m}\in\mathcal{T}_{i}}\mathbf{g}^{(\ell)}\left[k-k_{m}\right]\right| & \leq & C_{\ell}\sum_{k_{m}\in\mathcal{T}_{i}}\frac{1}{1+\left(\frac{k-k_{m}}{N\sigma}\right)^2}\nonumber \\ &\leq & 2C_{\ell}\left(1+\frac{\pi^{2}}{6\nu^{2}}\right).\label{eq:bound_g1h}
\end{eqnarray}
According to (\ref{eq:q_form}), the left-hand of (\ref{eq:lemma3.4})
can be explicitly written as: 
\begin{equation}
\begin{split}
&\sum_{k\in\mathbb{Z}}\prod_{i=1}^{r}\left(\tilde{\mathbf{q}}_{i}\left[k\right]\right)^{\alpha_{i}}\mathbf{h}[k] \\ & =\sum_{k\in\mathbb{Z}}\prod_{i=1}^{r}\left(\sum_{k_{m_{i}}\in\mathcal{T}_{i}}a_{m_{i}}\mathbf{g}\left[k-k_{m_{i}}\right]+b_{m_{i}}\mathbf{g}^{(1)}\left[k-k_{m_{i}}\right]\right)^{\alpha_{i}}\mathbf{h}[k].\label{eq:qh_explicit}
\end{split}
\end{equation}
This expression can be decomposed into (at most) $2^{r}$ terms. We
commence by considering the first term of the expression with $\alpha_{1}=\alpha_{2}=1$
and $\alpha_{i}=0$ for $i>2$ (namely, the product of the shifts of $\mathbf{g}$). Using (\ref{eq:delta})
and (\ref{eq:bound_g1h}) we get 
\begin{eqnarray*}
&&\sum_{k\in\mathbb{Z}}\sum_{k_{m_{1}}\in\mathcal{T}_{1}}a_{m_{1}}\mathbf{g}\left[k-k_{m_{1}}\right]\sum_{k_{m_{2}}\in\mathcal{T}_{2}}a_{m_{2}}\mathbf{g}\left[k-k_{m_{2}}\right]\mathbf{h}[k] \\ & \leq & \left\Vert \mathbf{a}\right\Vert _{\infty}^{2}\left|\sum_{_{k_{m_{1}}\in\mathcal{T}_{1}}}\sum_{k\in\mathbb{Z}}\mathbf{g}\left[k-k_{m_{1}}\right]\mathbf{h}[k]\sum_{k_{m_{2}}\in\mathcal{T}_{2}}\mathbf{g}\left[k-k_{m_{2}}\right]\right|\\
 & \leq & 2\left\Vert \mathbf{a}\right\Vert _{\infty}^{2}C_{0}\left(1+\frac{\pi^{2}}{6\nu^{2}}\right)\sum_{n\in\mathbb{Z}}\left|\sum_{k\in\mathbb{Z}}\mathbf{g}\left[k-n\right]\mathbf{h}[k]\right|\\
 & \leq & 4\left\Vert \mathbf{a}\right\Vert _{\infty}^{2}C_{0}\left(1+\frac{\pi^{2}}{6\nu^{2}}\right)\delta.
\end{eqnarray*}
From the same methodology and using (\ref{eq:a}), we conclude that
for any sequence of coefficients $\left\{ \alpha_{i}\right\} _{i=1}^{r}\in\left\{ 0,1\right\} $
we get 
\begin{equation*}
\begin{split}
&\sum_{k\in\mathbb{Z}}\prod_{i=1}^{r}\left(\sum_{k_{m_{i}}\in\mathcal{T}_{i}}a_{m_{i}}\mathbf{g}\left[k-k_{m_{i}}\right]\right)\mathbf{h}[k] \\ &\leq\left(C_{0}\left(1+\frac{\pi^{2}}{6\nu^{2}}\right)\right)^{r-1}\left(\frac{6\nu^{2}}{3\mathbf{g}\left(0\right)\nu^{2}-2\pi^{2}C_{0}}\right)^{r}\delta.
\end{split}
\end{equation*}

Next, using (\ref{eq:bound_g1h}) we observe that all other $2^{r}-1$
terms of (\ref{eq:qh_explicit}) can be bounded by $c_{0}\left\Vert \mathbf{a}\right\Vert _{\infty}^{\beta_{1}}\left\Vert \mathbf{b}\right\Vert _{\infty}^{\beta_{2}}\left\Vert \mathbf{h}\right\Vert _{1}$
for some constant $c_{0}>0$ and $0\leq\beta_{1}\leq r-1$, $1\leq\beta_{2}\leq r$.
Hence, we conclude by (\ref{eq:a}) and (\ref{eq:b}) that the summation
of all these terms is bounded by $c^{*}\nu^{-4}\left\Vert \mathbf{h}\right\Vert _{1}$
for  sufficiently large constants $c^{*}>0$ and  $\nu$. The constant $c^{*}>0$  depends only on the kernel $\mathbf{g}$. This completes
the proof.
\end{proof}
Consider $\mathbf{x}\in\mathcal{R}_{+}\left(\nu\sigma,r\right)$ and let us
define the sets $\mathcal{N}:=\left\{ k/N\thinspace:\thinspace \mathbf{h}[k]<0\right\} $
and respectively $\mathcal{N}^{C}:=\left\{ k/N\thinspace:\thinspace \mathbf{h}[k]\geq0\right\} $.
Throughout the proof, we use the notation $k\in\mathcal{N}$ and
$k\in\mathcal{N}^{C}$ to denote some $k\in\mathbb{Z}$ so that $k/N\in\mathcal{N}$
and $k/N\in\mathcal{N}^{C}$, respectively. Observe that by definition,
$\mathcal{N}\subseteq\mbox{supp}(\mathbf{x})$ and thus $\mathcal{N}\in\mathcal{R}^{idx}\left(\nu\sigma,r\right)$.
The set $\mathcal{N}$ can be presented as the union of $r$ non-intersecting
subsets $\mathcal{N}=\cup_{i=1}^{r}\mathcal{N}_{i}$, where $\mathcal{N}_{i}=\left\{ t_{i},t_{i+r},t_{i+2r},\dots\right\} $
and $\mathcal{N}_{i}\in\mathcal{R}^{idx}\left(\nu\sigma,1\right)$.
Therefore, for each subset $\mathcal{N}_{i}$ there exists an associated
function $\tilde{\mathbf{q}}_{i}[k]=\tilde{\mathbf{q}}_{i}\left(k/N\right)$ as given
in Lemma \ref{lemma:q}. The proof builds upon the following construction 
\begin{equation}
\mathbf{q}[k]:=\prod_{i=1}^{r}\left(1-\tilde{\mathbf{q}}_{i}[k]\right)-\rho,\label{eq:q}
\end{equation}
for some constant $\rho>0$, to be defined later. The function $\mathbf{q}[k]$ satisfies the following properties:
\begin{lem}
\label{lem:q_prod}Let $\mathbf{q}$ be as in (\ref{eq:q}), let $N\sigma>\left(\frac{1}{2}\right)^{\frac{1}{2r}+1}\sqrt{\frac{\beta}{\mathbf{g}(0)}}$ and let 
\begin{equation}
\rho\geq\frac{1}{2}\left(\frac{\beta}{4\mathbf{g}(0)\gamma^{2}}\right)^{r},\label{eq:rho}
\end{equation}
where $\gamma:=\max\left\{ N\sigma,\varepsilon^{-1}\right\} $. Then,
we have 
\begin{eqnarray*}
\mathbf{q}\left[k_{m}\right] & = & -\rho,\quad k_{m}\in\mathcal{N},\\
\mathbf{q}\left[k\right] & \geq & \rho,\quad k\in\mathcal{N}^{C},\\
\mathbf{q}\left[k\right] & \leq & 1,\quad k\in\mathbb{Z}.
\end{eqnarray*}
\end{lem}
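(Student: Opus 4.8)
The plan is to read off all three inequalities directly from the pointwise behaviour of the functions $\tilde{\mathbf{q}}_{i}$ recorded in Lemma \ref{lemma:q}. The key preliminary observation is that for every $i$ one has $0\le\tilde{\mathbf{q}}_{i}(t)\le1$ for all $t\in\mathbb{R}$: the lower bound is the last assertion of Lemma \ref{lemma:q}, while the upper bound follows because each of the first three assertions there bounds $\tilde{\mathbf{q}}_{i}$ by $1$ minus a non-negative quantity. Consequently each factor $1-\tilde{\mathbf{q}}_{i}[k]$ lies in $[0,1]$, and hence $\prod_{i=1}^{r}\bigl(1-\tilde{\mathbf{q}}_{i}[k]\bigr)\in[0,1]$ for every $k$. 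This immediately gives the third claim, since $\mathbf{q}[k]=\prod_{i=1}^{r}\bigl(1-\tilde{\mathbf{q}}_{i}[k]\bigr)-\rho\le1-\rho\le1$.

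For the first claim, fix $k_{m}$ with $k_{m}/N\in\mathcal{N}$. Since $\mathcal{N}=\bigcup_{i=1}^{r}\mathcal{N}_{i}$ is a union of pairwise non-intersecting sets, $k_{m}/N$ lies in exactly one of them, say $\mathcal{N}_{j}$; the interpolation property $\tilde{\mathbf{q}}_{j}(t_{m})=1$ of Lemma \ref{lemma:q} then gives $\tilde{\mathbf{q}}_{j}[k_{m}]=1$, so the $j$-th factor of the product vanishes and $\mathbf{q}[k_{m}]=0-\rho=-\rho$.

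The substantive step is the second claim. Fix $k$ with $k/N\in\mathcal{N}^{C}$, so that $k/N\notin\mathcal{N}_{i}$ for every $i$, and estimate each factor $1-\tilde{\mathbf{q}}_{i}[k]$ from below according to the distance from $k/N$ to $\mathcal{N}_{i}$. If the closest point $t_{m}\in\mathcal{N}_{i}$ satisfies $\vert k/N-t_{m}\vert\le\varepsilon\sigma$, then the quadratic estimate of Lemma \ref{lemma:q} gives $1-\tilde{\mathbf{q}}_{i}[k]\ge\frac{\beta(k/N-t_{m})^{2}}{4\mathbf{g}(0)\sigma^{2}}$; because $k/N$ and $t_{m}$ are two distinct points of the grid $\{j/N\}_{j\in\mathbb{Z}}$ one has $\vert k/N-t_{m}\vert\ge1/N$, whence $1-\tilde{\mathbf{q}}_{i}[k]\ge\frac{\beta}{4\mathbf{g}(0)(N\sigma)^{2}}$. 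If instead $\vert k/N-t_{m}\vert>\varepsilon\sigma$ for every $t_{m}\in\mathcal{N}_{i}$ (in particular in the vacuous case $\mathcal{N}_{i}=\emptyset$), the third assertion of Lemma \ref{lemma:q} gives $1-\tilde{\mathbf{q}}_{i}[k]>\frac{\beta\varepsilon^{2}}{4\mathbf{g}(0)}$. In both cases $1-\tilde{\mathbf{q}}_{i}[k]\ge\frac{\beta}{4\mathbf{g}(0)}\min\{(N\sigma)^{-2},\varepsilon^{2}\}=\frac{\beta}{4\mathbf{g}(0)\gamma^{2}}$, with $\gamma=\max\{N\sigma,\varepsilon^{-1}\}$. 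Multiplying these $r$ lower bounds gives $\prod_{i=1}^{r}\bigl(1-\tilde{\mathbf{q}}_{i}[k]\bigr)\ge\left(\frac{\beta}{4\mathbf{g}(0)\gamma^{2}}\right)^{r}$, and invoking \eqref{eq:rho} one concludes $\mathbf{q}[k]=\prod_{i=1}^{r}\bigl(1-\tilde{\mathbf{q}}_{i}[k]\bigr)-\rho\ge\rho$.

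I expect the only delicate point to be this last lower bound: one has to control $1-\tilde{\mathbf{q}}_{i}[k]$ even when $k/N\in\mathcal{N}^{C}$ sits arbitrarily close to a point of $\mathcal{N}_{i}$, which is exactly where the grid spacing $1/N$ must be used and where the factor $(N\sigma)^{-2}$ appears, and then one has to merge the near- and far-field estimates into the single quantity $\gamma^{-2}$. The remaining hypothesis $N\sigma>\left(\frac{1}{2}\right)^{\frac{1}{2r}+1}\sqrt{\beta/\mathbf{g}(0)}$ is not itself needed to derive the three displayed inequalities; squaring and rearranging it is equivalent to $\left(\frac{\beta}{4\mathbf{g}(0)(N\sigma)^{2}}\right)^{r}<2$, which together with $\gamma\ge N\sigma$ forces $\left(\frac{\beta}{4\mathbf{g}(0)\gamma^{2}}\right)^{r}<2$, so that a value of $\rho$ compatible with \eqref{eq:rho} can be taken below $1$; this keeps $\mathbf{q}[k]\ge\rho$ and $\mathbf{q}[k]\le1-\rho$ non-degenerate and is what the subsequent bound on $\Vert\mathbf{h}\Vert_{1}$ relies upon.
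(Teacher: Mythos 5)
Your argument is the paper's argument, with the one step the paper leaves implicit worked out in full: the lower bound $\prod_{i=1}^{r}\bigl(1-\tilde{\mathbf{q}}_{i}[k]\bigr)\ge\bigl(\beta/(4\mathbf{g}(0)\gamma^{2})\bigr)^{r}$ on $\mathcal{N}^{C}$ via the near/far case split, the use of the grid spacing $1/N$ to produce the $(N\sigma)^{-2}$ factor, and the merging of the two regimes into $\gamma^{-2}$ are exactly what the paper's single asserted inequality rests on, and your handling of the first and third displayed properties (the vanishing factor at points of $\mathcal{N}$, and $0\le\tilde{\mathbf{q}}_i\le1$ giving $\mathbf{q}\le1-\rho$) matches the paper's. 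The one point to flag is the very last inference: from $\prod_{i}\bigl(1-\tilde{\mathbf{q}}_{i}[k]\bigr)\ge\bigl(\beta/(4\mathbf{g}(0)\gamma^{2})\bigr)^{r}$ together with $\rho\ge\tfrac12\bigl(\beta/(4\mathbf{g}(0)\gamma^{2})\bigr)^{r}$ one cannot conclude $\prod_{i}\bigl(1-\tilde{\mathbf{q}}_{i}[k]\bigr)-\rho\ge\rho$; that conclusion needs $2\rho\le\prod_{i}\bigl(1-\tilde{\mathbf{q}}_{i}[k]\bigr)$, i.e.\ the inequality in (\ref{eq:rho}) effectively reversed, or $\rho$ pinned to $\tfrac12\min_{k\in\mathcal{N}^{C}}\prod_{i}\bigl(1-\tilde{\mathbf{q}}_{i}[k]\bigr)$ --- which is what the paper's garbled ``$\rho:=\arg\min$'' line is trying to say, and for which (\ref{eq:rho}) is then a consequence rather than a hypothesis. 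This defect lives in the lemma's statement and is shared by the paper's own proof, so it does not count against you; your reconstruction of the intended argument, including the observation that the hypothesis on $N\sigma$ serves only to keep $\rho<1$, is faithful and complete.
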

\begin{proof}
Since $\mathcal{N}_{i}\in\mathcal{R}^{idx}\left(\nu\sigma,1\right),$
by Lemma \ref{lemma:q} there exists for each subset $\mathcal{N}_{i}$
an associated interpolating function $\tilde{\mathbf{q}_{i}}[k]=\tilde{\mathbf{q}_{i}}(k/N)$.
Consequently, for all $k_{m}\in\mathcal{N}$ we obtain 
\begin{equation*}
\begin{split}
\mathbf{q}\left[k_{m}\right]&=\prod_{i=1}^{r}\left(1-\tilde{\mathbf{q}}_{i}\left[k_{m}\right]\right)-\rho \\
&=-\rho,
\end{split}
\end{equation*}
and for all $k\in\mathcal{N}^{C}$ we have 
\begin{eqnarray*}
\mathbf{q}[k] & = & \prod_{i=1}^{r}\left(1-\tilde{\mathbf{q}}_{i}\left[k\right]\right)-\rho\\ &\geq & \left(\frac{\beta}{4\mathbf{g}(0)\gamma^{2}}\right)^{r}-\rho.
\end{eqnarray*}
By setting 
\[
\rho:=\arg\min_{k\in\mathcal{N}^{C}}\mathbf{q}[k]\geq\frac{1}{2}\left(\frac{\beta}{4\mathbf{g}(0)\gamma^{2}}\right)^{r},
\]
we conclude the proof. Note that in order to guarantees $\rho<1$,
we require $N\sigma>\left(\frac{1}{2}\right)^{\frac{1}{2r}+1}\sqrt{\frac{\beta}{\mathbf{g}(0)}}$
.
\end{proof}
Equipped with Lemma \ref{lem:q_prod}, we conclude that $\mathbf{q}[k]$ and
$\mathbf{h}[k]$ have the same sign for all $k\in\mathbb{Z}$, and thus 
\begin{equation}
\begin{split}
\left\langle \mathbf{q},\mathbf{h}\right\rangle =\sum_{k\in\mathbb{Z}}\mathbf{q}[k]\mathbf{h}[k]&=\sum_{k\in\mathbb{Z}}\vert \mathbf{q}[k]\vert \vert \mathbf{h}[k]\vert \\ &\geq\rho\left\Vert\mathbf{ h}\right\Vert _{1}.\label{eq:low-bound}
\end{split}
\end{equation}
To complete the proof, we need to bound the inner product $\left\langle \mathbf{q},\mathbf{h}\right\rangle $
from above. To this end, observe that 
\begin{equation}
\prod_{i=1}^{r}\left(1-\tilde{\mathbf{q}}_{i}\left[k\right]\right)=1+\boldsymbol{\kappa}_{r}[k],\label{eq:1-q}
\end{equation}
where 
\begin{eqnarray}
\boldsymbol{\kappa}_{r}[k] & := & \sum_{j=1}^{2^{r}-1}\prod_{i=1}^{r}\left(-\tilde{\mathbf{q}}_{i}\left[k\right]\right)^{\alpha_{i}(j)},\label{eq:Q}
\end{eqnarray}
for some coefficients $\left\{ \alpha_{i}(j)\right\} _{i=1}^{r}\in\left\{ 0,1\right\} $.
For instance, $\boldsymbol{\kappa}_{2}[k]=-\tilde{\mathbf{q}}_{1}\left[k\right]-\tilde{\mathbf{q}}_{2}\left[k\right]+\tilde{\mathbf{q}}_{1}\left[k\right]\tilde{\mathbf{q}}_{2}\left[k\right].$
Therefore, by (\ref{eq:q}) and (\ref{eq:1-q}) we get

\begin{eqnarray}
\left\langle \mathbf{q},\mathbf{h}\right\rangle  & = & \left\langle \prod_{i=1}^{r}\left(1-\tilde{\mathbf{q}}_{i}\left[k\right]\right)-\rho,\mathbf{h}\right\rangle \label{eq:qh1}\\
 & = & \left\langle \left(1-\rho\right)+\boldsymbol{\kappa}_{r},\mathbf{h}\right\rangle \nonumber \\
 & = & \left(1-\rho\right)\sum_{k\in\mathbb{Z}}\mathbf{h}[k]+\left\langle \boldsymbol{\kappa}_{r},\mathbf{h}\right\rangle .\nonumber 
\end{eqnarray}

Recall that by (\ref{eq:cvx}) we have $\left\Vert \hat{\mathbf{x}}\right\Vert _{1}\leq\left\Vert \mathbf{x}\right\Vert _{1}$
and therefore 
\begin{equation*}
\begin{split}
\left\Vert \mathbf{x}\right\Vert _{1}\geq\left\Vert \mathbf{x}+\mathbf{h}\right\Vert _{1}&=\sum_{k\in\mbox{supp}(\mathbf{x})}\left|\mathbf{x}[k]+\mathbf{h}[k]\right|\\ & +\sum_{k\in\mathbb{Z}\backslash\mbox{supp}(\mathbf{x})}\left|\mathbf{h}[k]\right|.
\end{split}
\end{equation*}
By definition $\mathbf{h}[k]\geq0$ for all $k\in\mathcal{N}^{C}$ and we use
the triangle inequality to deduce 
\begin{eqnarray*}
\left\Vert \mathbf{x}\right\Vert _{1} & \geq & \sum_{k\in\mathbb{Z}\backslash\mbox{supp}(\mathbf{x})}\mathbf{h}[k]\\ &+&\sum_{k\in\mbox{supp}(\mathbf{x})\backslash\mathcal{N}}\left(\mathbf{x}[k]+\mathbf{h}[k]\right)+\sum_{k\in\mathcal{N}}\left|\mathbf{x}[k]+\mathbf{h}[k]\right|\\
 & \geq & \left\Vert \mathbf{x}\right\Vert _{1}+\sum_{k\in\mathcal{N}^{C}}\mathbf{h}[k]-\sum_{k\in\mathcal{N}}\left|\mathbf{h}[k]\right|,
\end{eqnarray*}
and thus we conclude
\begin{equation}
\sum_{k\in\mathbb{Z}}\mathbf{h}[k]\leq0.\label{eq:nsp}
\end{equation}
So, from (\ref{eq:Q}), (\ref{eq:qh1}), (\ref{eq:nsp}) and Lemma
\ref{claim:qh} we conclude that 
\begin{eqnarray}
\left\langle \mathbf{q},\mathbf{h}\right\rangle  & \leq & \left|\left\langle \boldsymbol{\boldsymbol{\kappa}}_{r},\mathbf{h}\right\rangle \right|\leq\sum_{j=1}^{2^{r}-1}\left|\sum_{k\in\mathbb{Z}}\prod_{i=1}^{r}\left(\tilde{\mathbf{q}}_{i}\left[k\right]\right)^{\alpha_{i}(j)}\mathbf{h}[k]\right|\nonumber \\
 & \leq & \left(2^{r}-1\right)\left(C_{0}\left(1+\frac{\pi^{2}}{6\nu^{2}}\right)\right)^{r-1}\nonumber\\ &\cdot &\left(\frac{6\nu^{2}}{3\mathbf{g}\left(0\right)\nu^{2}-2\pi^{2}C_{0}}\right)^{r}\delta \nonumber \\ &+&c^{*}\left(2^{r}-1\right)\nu^{-4}\left\Vert \mathbf{h}\right\Vert _{1}. \label{eq:upper-bound}
\end{eqnarray}
Combining (\ref{eq:upper-bound}) with (\ref{eq:low-bound}) and (\ref{eq:rho})
yields 
\begin{eqnarray*}
\left\Vert \mathbf{h}\right\Vert _{1} & \leq & \frac{\left(2^{r}-1\right)\left(C_{0}\left(1+\frac{\pi^{2}}{6\nu^{2}}\right)\right)^{r-1}\left(\frac{6\nu^{2}}{3\mathbf{g}\left(0\right)\nu^{2}-2\pi^{2}C_{0}}\right)^{r}}{\frac{1}{2}\left(\frac{\beta}{4\mathbf{g}(0)\gamma^{2}}\right)^{r}-c^{*}\left(2^{r}-1\right)\nu^{-4}}\delta.
\end{eqnarray*}
This completes the proof of Theorem \ref{th:main}.

\section{\label{sec:Proof_th2}Proof of Theorem \ref{thm:main2}}

The proof of Theorem \ref{thm:main2} follows the methodology of the
proof in Section \ref{sec:Proof_main_th}. We commence by stating
the extension of Lemma \ref{lemma:q} to the two-dimensional case, based
on results from \cite{bendorySOP}:
\begin{lem}
\label{lemma:q2} Let $\mathbf{g_{2}}$ be a non-negative two-dimensional admissible
kernel as defined in Definition \ref{def:kernel} and suppose that
$\mathcal{T}_{2}:=\left\{ \mathbf{t}_{m}\right\} \in\mathcal{R}_{2}^{idx}\left(\nu\sigma,1\right)$.
Then, there exists a kernel-dependent separation constant $\nu>0$
and a set of coefficients $\left\{ a_{m}\right\} ,\left\{ b_{m}^{1}\right\} $
and $\left\{ b_{m}^{2}\right\} $ such that there exist an associated function of the form
\begin{equation}
\begin{split}
\mathbf{q_{2}}(\mathbf{t})=\sum_{m}a_{m}\mathbf{g_{2}}\left(\frac{\mathbf{t}-\mathbf{t}_{m}}{\sigma}\right)&+b_{m}^{1}\mathbf{g_{2}}^{(1,0)}\left(\frac{\mathbf{t}-\mathbf{t}_{m}}{\sigma}\right)\\&+b_{m}^{2}\mathbf{g_{2}}^{(0,1)}\left(\frac{\mathbf{t}-\mathbf{t}_{m}}{\sigma}\right),\label{eq:q2_form}
\end{split}
\end{equation}
 which satisfies:
\begin{eqnarray*}
\tilde{\mathbf{q}}_\mathbf{2}\left(\mathbf{t}\right) & = & 1,\quad{t}_{m}\in\mathcal{T}_{2},\\
\tilde{\mathbf{q}}\left(\mathbf{t}\right) & \leq & 1-c_{1}\frac{\left\Vert \mathbf{t}-\mathbf{t}_{m}\right\Vert _{2}^{2}}{\sigma^{2}},\left\Vert \mathbf{t}-\mathbf{t}_{m}\right\Vert _{\infty}\leq\sigma\varepsilon_{1},\thinspace\mathbf{t}_{m}\in\mathcal{T}_{2},\\
\tilde{\mathbf{q}}\left(\mathbf{t}\right) & \leq & 1-c_{2},\quad\left\Vert \mathbf{t}-\mathbf{t}_{m}\right\Vert _{\infty}>\varepsilon_{1}\sigma,\thinspace\forall \mathbf{t}_{m}\in\mathcal{T}_{2},\\
\tilde{\mathbf{q}}\left(\mathbf{t}\right) & \geq & 0  ,
\end{eqnarray*}
for sufficiently small $\varepsilon_{1}\leq\varepsilon$ associated
with the kernel $\mathbf{g_{2}}$, and some constants $c_{1},c_{2}>0$. For
sufficiently large $\nu>0$ and constants $c_{a},c_{b}>0$, we also
have 
\begin{eqnarray*}
\left\Vert \mathbf{a}\right\Vert _{\infty}: & = & \underset{m}{\max}\left\vert a_{m}\right\vert \leq\frac{1}{\mathbf{g_2}(0,0)}+c_{a}\nu^{-3},\\
\left\Vert \mathbf{\tilde{b}}\right\Vert _{\infty}: & = & \underset{m}{\max}\left\vert b_{m}^{1}\right\vert ,\left\vert b_{m}^{2}\right\vert \leq c_{b}\nu^{-6}.
\end{eqnarray*}

\end{lem}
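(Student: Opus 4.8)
The plan is to follow the proof of Lemma \ref{lemma:q} almost verbatim, replacing its one-dimensional ingredients by the two-dimensional dual-certificate estimates of \cite{bendorySOP}, and then to verify the extra non-negativity claim exactly as in the remark following Lemma \ref{lemma:q}. Concretely, I would look for $\tilde{\mathbf{q}}_2$ of the form (\ref{eq:q2_form}) singled out by the interpolation conditions
\[
\tilde{\mathbf{q}}_2(\mathbf{t}_m)=1,\qquad \partial_{t_1}\tilde{\mathbf{q}}_2(\mathbf{t}_m)=\partial_{t_2}\tilde{\mathbf{q}}_2(\mathbf{t}_m)=0,\qquad \mathbf{t}_m\in\mathcal{T}_2 .
\]
This is a square linear system in the unknowns $\{a_m\},\{b_m^1\},\{b_m^2\}$, whose coefficient matrix is block-structured, the $(m,m')$ block being assembled from the values of $\mathbf{g_2}$ and its first and second partials evaluated at $(\mathbf{t}_m-\mathbf{t}_{m'})/\sigma$.

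The first, and technically heaviest, step is to prove that this system is invertible for $\nu$ large and to extract the stated coefficient bounds. The diagonal block is nondegenerate: $\mathbf{g_2}(0,0)>0$; the mixed partial $\mathbf{g_2}^{(1,1)}(0,0)$ vanishes because $\mathbf{g_2}$ is even in each variable separately (so $\partial_{t_1}\mathbf{g_2}$, and hence $\partial_{t_2}\partial_{t_1}\mathbf{g_2}$, is odd in $t_1$); and $\mathbf{g_2}^{(2,0)}(0,0),\mathbf{g_2}^{(0,2)}(0,0)\le-\beta<0$ by the local property. The off-diagonal blocks are uniformly small: since $\mathcal{T}_2\in\mathcal{R}_2^{idx}(\nu\sigma,1)$ the centers are $\nu\sigma$-separated in $\ell_\infty$, so by the global bound in Definition \ref{def:kernel2d} each entry is at most $C_{\ell_1,\ell_2}\bigl(1+\|\mathbf{t}_m-\mathbf{t}_{m'}\|_2^2/\sigma^2\bigr)^{-3/2}$, and the planar sum $\sum_{m'\neq m}\bigl(1+\|\mathbf{t}_m-\mathbf{t}_{m'}\|_2^2/\sigma^2\bigr)^{-3/2}$ is $O(\nu^{-3})$ — here the $3/2$ exponent in Definition \ref{def:kernel2d} is precisely what makes this two-dimensional sum both convergent and decaying in $\nu$. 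Hence for $\nu$ large the matrix is a small perturbation of an invertible block-diagonal operator, and a Neumann-series estimate gives invertibility together with $\|\mathbf{a}\|_\infty\le 1/\mathbf{g_2}(0,0)+c_a\nu^{-3}$ and $\|\tilde{\mathbf{b}}\|_\infty\le c_b\nu^{-6}$, the derivative coefficients being of strictly smaller order, as in the one-dimensional bounds (\ref{eq:a})--(\ref{eq:b}).

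Once the certificate is in hand, its local shape follows from Taylor's theorem: since $\tilde{\mathbf{q}}_2(\mathbf{t}_m)=1$ and $\nabla\tilde{\mathbf{q}}_2(\mathbf{t}_m)=0$, one has $\tilde{\mathbf{q}}_2(\mathbf{t})=1+\tfrac12(\mathbf{t}-\mathbf{t}_m)^{\top}H_m(\mathbf{t}-\mathbf{t}_m)+R(\mathbf{t})$, where $H_m$ is the Hessian at $\mathbf{t}_m$ and $R$ is a cubic remainder. The leading part of $H_m$ is the self-term Hessian $(a_m/\sigma^2)\,\mathrm{diag}\bigl(\mathbf{g_2}^{(2,0)}(0,0),\mathbf{g_2}^{(0,2)}(0,0)\bigr)$, which is negative definite once $a_m>0$; the other centers and the $b$-terms perturb $H_m$ only by $O(\nu^{-3}\sigma^{-2})$, and the $\mathcal{C}^3$ bounds on $\mathbf{g_2}$ control $R$ uniformly on $\|\mathbf{t}-\mathbf{t}_m\|_\infty\le\sigma\varepsilon_1$. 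Choosing $\varepsilon_1\le\varepsilon$ small and $\nu$ large then yields $\tilde{\mathbf{q}}_2(\mathbf{t})\le 1-c_1\|\mathbf{t}-\mathbf{t}_m\|_2^2/\sigma^2$ on that cube. For $\mathbf{t}$ lying at $\ell_\infty$ distance more than $\varepsilon_1\sigma$ from every center I would bound $\tilde{\mathbf{q}}_2(\mathbf{t})$ away from $1$ by invoking the local property $\mathbf{g_2}(t_1,t_2)<\mathbf{g_2}(\varepsilon,0)$ (resp.\ $<\mathbf{g_2}(0,\varepsilon)$) to force the nearest center's contribution strictly below $a_m\mathbf{g_2}(0,0)\approx1$, the remaining centers' tail and the derivative terms contributing only $O(\nu^{-3})$; this is the step that tracks \cite{bendorySOP} most closely and gives $\tilde{\mathbf{q}}_2(\mathbf{t})\le 1-c_2$.

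Finally, for the new property $\tilde{\mathbf{q}}_2\ge0$: since $\mathbf{g_2}\ge0$ and each $a_m=1/\mathbf{g_2}(0,0)+O(\nu^{-3})>0$, the leading sum $\sum_m a_m\mathbf{g_2}((\mathbf{t}-\mathbf{t}_m)/\sigma)$ is non-negative; within $\sigma\varepsilon_1$ of a center it is bounded below by $a_m\mathbf{g_2}(\varepsilon_1,\varepsilon_1)>0$ and thus dominates the $O(\nu^{-6})$ derivative correction, and elsewhere one argues exactly as in the remark following Lemma \ref{lemma:q}; taking $\nu$ at least the maximum of the finitely many kernel-dependent thresholds above then makes all of the claimed properties hold simultaneously. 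The main obstacle is the first step — establishing invertibility of the two-dimensional interpolation system with the correct $\nu$-dependence of $\|\mathbf{a}\|_\infty$ and $\|\tilde{\mathbf{b}}\|_\infty$ — because that is where the planar geometry and the $(1+\cdot)^{-3/2}$ decay of Definition \ref{def:kernel2d} enter essentially; the remaining properties are then the same Taylor-expansion and tail bookkeeping as in the one-dimensional argument and in \cite{bendorySOP}.
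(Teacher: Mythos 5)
The paper does not actually prove Lemma \ref{lemma:q2}: it is stated ``based on results from \cite{bendorySOP}'' and imported wholesale, with the one genuinely new ingredient --- the property $\tilde{\mathbf{q}}_{\mathbf{2}}\geq 0$ --- dispatched by the same one-line remark that follows Lemma \ref{lemma:q}. Your proposal therefore does considerably more than the paper itself: it reconstructs the dual certificate from scratch. The route you take --- value-and-gradient interpolation at the centers, invertibility of the resulting block system by diagonal dominance for large $\nu$ (with the $\left(1+t_{1}^{2}+t_{2}^{2}\right)^{-3/2}$ decay of Definition \ref{def:kernel2d} giving $O(\nu^{-3})$ off-diagonal row sums over a planar $\nu\sigma$-separated configuration), a Neumann series for the coefficient bounds, a Taylor expansion near each center, and tail estimates away from the support --- is exactly the construction of \cite{bendorySOP} on which the paper leans, so in substance you and the paper agree about where the content lives; you have simply written out the argument the paper only cites.

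Two steps of your sketch are thinner than the statement requires. First, the bound $\left\Vert \tilde{\mathbf{b}}\right\Vert _{\infty}\leq c_{b}\nu^{-6}$ does not follow from a single Neumann-series correction: eliminating $\mathbf{a}$ gives, to leading order, $b\approx-(D_{2})^{-1}E_{1}^{\top}a$ with $\Vert E_{1}^{\top}\Vert_{\infty}=O(\nu^{-3})$ and $\Vert\mathbf{a}\Vert_{\infty}=O(1)$, which yields only $O(\nu^{-3})$; the claimed $\nu^{-6}$ is the \emph{square} of the off-diagonal decay (mirroring the $\nu^{-4}=(\nu^{-2})^{2}$ of (\ref{eq:b})) and requires the additional cancellation worked out in \cite{bendorySOP}, which you should either reproduce or cite explicitly. (For the downstream use in Lemma \ref{lem:3.4_2} an $O(\nu^{-3})$ bound would still suffice for $\nu$ large, but it is not the bound the lemma asserts.) Second, your non-negativity argument covers a neighborhood of each center, but ``elsewhere one argues exactly as in the remark'' conceals a real issue: the atoms $\mathbf{g_{2}}^{(1,0)},\mathbf{g_{2}}^{(0,1)}$ are sign-indefinite, and for kernels such as the Gaussian the ratio $|\mathbf{g_{2}}^{(1,0)}|/\mathbf{g_{2}}$ grows without bound away from the center, so the non-negative $a$-part does not automatically dominate the derivative corrections in the far field. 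The domination is immediate for heavy-tailed kernels like the Cauchy kernel, where $|\mathbf{g}^{(1)}|\leq\mathbf{g}$ pointwise, but in general this step needs a kernel-dependent comparison rather than the appeal to $\mathbf{g_{2}}\geq0$ alone.
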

We present now the two-dimensional version of Lemma \ref{claim:qh}
without a proof. The proof relies on the same methodology as the one-dimensional
case. 
\begin{lem}
\label{lem:3.4_2}Let $\left\{ \mathcal{T}_{i,2}\right\} _{i=1}^{r}$  be a union of $r$ non-intersecting
sets  obeying $\mathcal{T}_{i,2}\in\mathcal{R}_{2}^{idx}\left(\nu\sigma,1\right)$
for all $i\in\left\{ 1,\dots,r\right\} $. For each set $\mathcal{T}_{i,2}$,
let $\tilde{\mathbf{q}}_\mathbf{i,2}[\mathbf{k}]:=\tilde{\mathbf{q}}_\mathbf{i,2}\left(\mathbf{k}/N\right)$,
$\mathbf{k}\in\mathbb{Z}^{2}$, be an associated function, where $\tilde{\mathbf{q}}_\mathbf{i,2}(\mathbf{t})$
is given in (\ref{eq:q2_form}). Then, for any sequence $\left\{ \alpha_{i}\right\} _{i=1}^{r}\in\left\{ 0,1\right\} $
we have for sufficiently large $\nu$, 
\begin{equation}
\sum_{{k}\in\mathbb{Z}^{2}}\prod_{i=1}^{r}\left(\tilde{\mathbf{q}}_\mathbf{i,2}\left[{k}\right]\right)^{\alpha_{i}}\mathbf{h}[{k}]\leq\tilde{C}_{2}(\mathbf{g},r)\delta+c^{*}\nu^{-6}\left\Vert \mathbf{h}\right\Vert _{1},\label{eq:lemma3.4_2}
\end{equation}
for some constants $c^{*}>0$ and $\tilde{C}_{2}(\mathbf{g_2},r)$ which depends
on the kernel $\mathbf{g_2}$ and the regularity parameter $r$. 
\end{lem}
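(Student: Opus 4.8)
The plan is to mirror the one-dimensional argument of Lemma~\ref{claim:qh}, replacing every one-dimensional estimate by its two-dimensional counterpart from Definition~\ref{def:kernel2d} and Lemma~\ref{lemma:q2}. First I would record the two preliminary calculations. The analogue of \eqref{eq:delta} is immediate: from \eqref{eq:model2} and \eqref{eq:cvx2} we get $\sum_{\mathbf{n}\in\mathbb{Z}^{2}}\left|\sum_{\mathbf{k}\in\mathbb{Z}^{2}}\mathbf{g_{2}}[\mathbf{k}-\mathbf{n}]\mathbf{h}[\mathbf{k}]\right|\leq 2\delta$, since $\hat{\mathbf{x}}_{\mathbf{2}}$ is feasible and $\mathbf{x_{2}}$ is, by construction, within $\delta$ of $\mathbf{y}$. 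The analogue of \eqref{eq:bound_g1h} is where the decay exponent $3/2$ in Definition~\ref{def:kernel2d}(2) enters: since $\mathcal{T}_{i,2}$ is a union of points separated by $\nu\sigma$ in $\ell_{\infty}$, a two-dimensional lattice-sum argument (as in Section~3.4 of \cite{bendorySOP}) gives $\sum_{\mathbf{k}_{m}\in\mathcal{T}_{i,2}} (1+\|(\mathbf{k}-\mathbf{k}_{m})/(N\sigma)\|_{2}^{2})^{-3/2}\leq c(\nu)$ for a constant $c(\nu)$ that is bounded (in fact $O(1)$) for $\nu$ large; hence $\left|\sum_{\mathbf{k}_{m}\in\mathcal{T}_{i,2}}\mathbf{g_{2}}^{(\ell_{1},\ell_{2})}[\mathbf{k}-\mathbf{k}_{m}]\right|\leq C_{\ell_{1},\ell_{2}}\,c(\nu)$ for $\ell_{1}+\ell_{2}\leq 1$.

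Next I would substitute the explicit form \eqref{eq:q2_form} of each $\tilde{\mathbf{q}}_\mathbf{i,2}$ into the left-hand side of \eqref{eq:lemma3.4_2} and expand the product over $i$, obtaining at most $3^{r}$ terms classified by how many factors contribute an $\mathbf{a}$-type (zeroth-order) shift versus a $\tilde{\mathbf{b}}$-type (first-order, either $(1,0)$ or $(0,1)$) shift. For the single distinguished term in which every active factor is an $\mathbf{a}$-type shift, I would peel off one factor at a time exactly as in the one-dimensional proof: bound all but one of the inner sums by $\|\mathbf{a}\|_{\infty}\,C_{0,0}\,c(\nu)$ using the lattice estimate, and bound the remaining sum by the $2\delta$ bound, collecting a total of the form $\tilde{C}_{2}(\mathbf{g_{2}},r)\delta$ with $\tilde{C}_{2}(\mathbf{g_{2}},r)\le (2C_{0,0}c(\nu))^{r-1}\|\mathbf{a}\|_{\infty}^{r}\cdot 2$; by Lemma~\ref{lemma:q2}, $\|\mathbf{a}\|_{\infty}\le \mathbf{g_{2}}(0,0)^{-1}+c_{a}\nu^{-3}$, so this constant depends only on $\mathbf{g_{2}}$ and $r$. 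Every other term contains at least one $\tilde{\mathbf{b}}$-type factor, hence a factor $\|\tilde{\mathbf{b}}\|_{\infty}\le c_{b}\nu^{-6}$, and can be crudely bounded by $c_{0}\|\mathbf{a}\|_{\infty}^{\beta_{1}}\|\tilde{\mathbf{b}}\|_{\infty}^{\beta_{2}}\|\mathbf{h}\|_{1}$ with $\beta_{2}\ge1$, using the lattice estimate on all $r$ factors and $\|\mathbf{h}\|_{1}$ to absorb the $\mathbf{h}$-sum; summing the at most $3^{r}-1$ such terms yields a bound $c^{*}\nu^{-6}\|\mathbf{h}\|_{1}$ for a constant $c^{*}$ depending only on $\mathbf{g_{2}}$ (and $r$), provided $\nu$ is large enough that each of these geometric factors is controlled. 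Combining the two contributions gives exactly \eqref{eq:lemma3.4_2}.

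The main obstacle, and the only place the two-dimensional geometry genuinely bites, is the lattice-sum estimate $\sum_{\mathbf{k}_{m}\in\mathcal{T}_{i,2}} (1+\|(\mathbf{k}-\mathbf{k}_{m})/(N\sigma)\|_{2}^{2})^{-3/2}\le c(\nu)$: in two dimensions the number of lattice points in an annulus of radius $\sim j\nu\sigma$ grows like $j$, so the series behaves like $\sum_{j}j\cdot j^{-3}=\sum_{j}j^{-2}$, which converges precisely because Definition~\ref{def:kernel2d}(2) imposes the exponent $3/2$ rather than $1$; the one-dimensional bound with exponent $1$ would diverge here. This is why Definition~\ref{def:kernel2d} is not a literal transcription of Definition~\ref{def:kernel}, and it is the one step that must be done carefully; everything else is a routine repackaging of the one-dimensional argument, so I would state the lemma's proof as "the same methodology" once this estimate is in hand, exactly as the paper does.
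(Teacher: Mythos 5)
Your proposal is correct and follows exactly the route the paper intends: the paper omits the proof of this lemma, stating only that it ``relies on the same methodology as the one-dimensional case'' (Lemma \ref{claim:qh}), and your write-up is precisely that adaptation, with the expansion into at most $3^{r}$ terms, the distinguished all-$\mathbf{a}$ term absorbed into the $\delta$-bound, and the remaining terms carrying a factor $\Vert\tilde{\mathbf{b}}\Vert_{\infty}\leq c_{b}\nu^{-6}$. Your observation that the decay exponent $3/2$ in Definition \ref{def:kernel2d} is exactly what makes the two-dimensional lattice sum converge is the one genuinely two-dimensional ingredient, and you have it right.
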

Let $\mathbf{k}\in\mathbb{Z}^{2}$. Let us define the sets $\mathcal{N}_{2}:=\left\{ \mathbf{k}/N\thinspace:\thinspace \mathbf{h}[\mathbf{k}]<0\right\} $
and $\mathcal{N}_{2}^{C}:=\left\{ \mathbf{k}/N\thinspace:\thinspace \mathbf{h}[\mathbf{k}]\geq0\right\} $.
Throughout the proof, we use the notation of $\mathbf{k}\in\mathcal{N}_{2}$
and $\mathbf{k}\in\mathcal{N}_{2}^{C}$ to denote all $\mathbf{k}\in\mathbb{Z}^{2}$
so that $\mathbf{k}/N\in\mathcal{N}_{2}$ and $\mathbf{k}/N\in\mathcal{N}_{2}^{C}$,
respectively. By definition, $\mathcal{N}_{2}\in\mathcal{R}_{2}^{idx}\left(\nu\sigma,r\right)$ (see Definition \ref{def:2Dray})
and it can be presented as the union of non-intersecting subsets $\mathcal{N}_{2}=\cup_{i=1}^{r}\mathcal{N}_{i,2}$
where $\mathcal{N}_{i,2}\in\mathcal{R}_{2}^{idx}\left(\nu\sigma,1\right)$.
Therefore, for each subset $\mathcal{N}_{i,2}$ there exists an associated
function $\tilde{\mathbf{q}}_\mathbf{i,2}[\mathbf{k}]=\tilde{\mathbf{q}}_\mathbf{i,2}\left(\mathbf{k}/N\right)$
 given in Lemma \ref{lemma:q2}. As in the one-dimensional case, the proof relies on the following construction
\begin{equation}
\mathbf{q_{2}}[\mathbf{k}]:=\prod_{i=1}^{r}\left(1-\tilde{\mathbf{q}}_\mathbf{i,2}[\mathbf{k}]\right)-\rho,\label{eq:q2}
\end{equation}
for some constant $\rho>0$, to be defined later. This function satisfies
the following interpolation properties:
\begin{lem}
\label{lem:q_prod2} Suppose that 
\begin{equation*}
N\sigma>\max\left\{ \sqrt{\frac{c_{1}}{c_{2}}},\left(\varepsilon_{1}\right)^{-1},\left(\frac{1}{2}\right)^{\frac{1}{2r}}\sqrt{c_{1}}\right\},
\end{equation*}
where $\varepsilon_{1}$ is given in Lemma \ref{lemma:q2}. Let $\mathbf{q_{2}}$
be as in (\ref{eq:q2}) and let 
\begin{equation}
\rho\geq\frac{1}{2}\left(\frac{c_{1}}{\left(N\sigma\right)^{2}}\right)^{r}.\label{eq:rho2}
\end{equation}
Then, 
\begin{eqnarray*}
\mathbf{q_{2}}\left[\mathbf{k}_{m}\right] & = & -\rho,\quad\mathbf{k}_{m}\in\mathcal{N}_{2},\\
\mathbf{q_{2}}\left[\mathbf{k}\right] & \geq & \rho,\quad \mathbf{k}\in\mathcal{N}_{2}^{C},\\
\mathbf{q_{2}}\left[\mathbf{k}\right] & \leq & 1,\quad\mathbf{k}\in\mathbb{Z}^{2}.
\end{eqnarray*}
\end{lem}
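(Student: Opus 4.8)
The plan is to mimic the proof of Lemma \ref{lem:q_prod} almost verbatim, replacing the one-dimensional interpolating functions $\tilde{\mathbf{q}}_i$ from Lemma \ref{lemma:q} by the two-dimensional ones $\tilde{\mathbf{q}}_\mathbf{i,2}$ from Lemma \ref{lemma:q2}, and replacing the quadratic lower bound near a spike, $\tilde{\mathbf{q}}_i(t)\le 1-\frac{\beta(t-t_m)^2}{4\mathbf{g}(0)\sigma^2}$, by the corresponding two-dimensional bound $\tilde{\mathbf{q}}_\mathbf{i,2}(\mathbf{t})\le 1-c_1\frac{\Vert\mathbf{t}-\mathbf{t}_m\Vert_2^2}{\sigma^2}$ valid for $\Vert\mathbf{t}-\mathbf{t}_m\Vert_\infty\le\sigma\varepsilon_1$, together with the far-field bound $\tilde{\mathbf{q}}_\mathbf{i,2}(\mathbf{t})\le 1-c_2$.

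First I would verify the equality on $\mathcal{N}_2$: for $\mathbf{k}_m\in\mathcal{N}_2=\cup_i\mathcal{N}_{i,2}$, $\mathbf{k}_m$ lies in exactly one subset $\mathcal{N}_{j,2}$, so $\tilde{\mathbf{q}}_\mathbf{j,2}[\mathbf{k}_m]=1$ and hence $\prod_{i=1}^r(1-\tilde{\mathbf{q}}_\mathbf{i,2}[\mathbf{k}_m])=0$, giving $\mathbf{q_2}[\mathbf{k}_m]=-\rho$. Second, for the upper bound $\mathbf{q_2}[\mathbf{k}]\le 1$: each factor satisfies $0\le 1-\tilde{\mathbf{q}}_\mathbf{i,2}[\mathbf{k}]\le 1$ because $0\le\tilde{\mathbf{q}}_\mathbf{i,2}\le 1$ (non-negativity from Lemma \ref{lemma:q2}, and the interpolation constraints force it below $1$), so the product is in $[0,1]$ and subtracting $\rho>0$ keeps it $\le 1$. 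Third, and this is the crux, I would establish the lower bound $\prod_{i=1}^r(1-\tilde{\mathbf{q}}_\mathbf{i,2}[\mathbf{k}])\ge (c_1/(N\sigma)^2)^r$ for every $\mathbf{k}\in\mathcal{N}_2^C$: fix such a $\mathbf{k}$ with $\mathbf{k}/N\notin\mathcal{N}_2$, and for each $i$ let $\mathbf{t}_{m_i}$ be the nearest point of $\mathcal{N}_{i,2}$ to $\mathbf{k}/N$. Either $\Vert\mathbf{k}/N-\mathbf{t}_{m_i}\Vert_\infty\le\sigma\varepsilon_1$, in which case $1-\tilde{\mathbf{q}}_\mathbf{i,2}[\mathbf{k}]\ge c_1\Vert\mathbf{k}/N-\mathbf{t}_{m_i}\Vert_2^2/\sigma^2\ge c_1(1/N)^2/\sigma^2=c_1/(N\sigma)^2$ (using that $\mathbf{k}/N\ne\mathbf{t}_{m_i}$ on the grid, so the distance is at least one sampling step $1/N$); or $\Vert\mathbf{k}/N-\mathbf{t}_{m_i}\Vert_\infty>\sigma\varepsilon_1$, in which case $1-\tilde{\mathbf{q}}_\mathbf{i,2}[\mathbf{k}]\ge c_2\ge c_1/(N\sigma)^2$ provided $N\sigma>\sqrt{c_1/c_2}$. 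Multiplying the $r$ factors gives the claimed lower bound, and then setting $\rho:=\min_{\mathbf{k}\in\mathcal{N}_2^C}\mathbf{q_2}[\mathbf{k}]$ yields $\rho\ge\frac12(c_1/(N\sigma)^2)^r$ as in (\ref{eq:rho2}); the condition $N\sigma>(1/2)^{1/(2r)}\sqrt{c_1}$ is exactly what makes $\rho<1$, so that the three properties are mutually consistent.

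The main obstacle is the case analysis in the third step: one must be careful that the quadratic estimate from Lemma \ref{lemma:q2} is only local (valid on the $\ell_\infty$-ball of radius $\sigma\varepsilon_1$) while the far-field estimate gives only the constant $c_2$, so the uniform lower bound on each factor is the minimum of $c_1/(N\sigma)^2$ and $c_2$, and one needs the stated lower bound on $N\sigma$ to certify that this minimum equals $c_1/(N\sigma)^2$; the requirement $N\sigma>\varepsilon_1^{-1}$ is what guarantees that the minimal on-grid displacement $1/N$ already falls inside the local region $1/N<\sigma\varepsilon_1$, so the quadratic bound is applicable at the worst-case grid point. The remainder is bookkeeping identical to the one-dimensional argument.
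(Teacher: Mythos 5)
Your proof is correct and follows essentially the same route as the paper's: evaluate the product at points of $\mathcal{N}_2$ to get $-\rho$, lower-bound each factor on $\mathcal{N}_2^C$ by $c_1/(N\sigma)^2$ under the stated conditions on $N\sigma$, and choose $\rho$ as half the minimum of the product. In fact your case analysis (nearest spike within the local region versus outside it, with the on-grid displacement of at least $1/N$ feeding the quadratic bound) supplies the details that the paper's one-line inequality $\prod_{i=1}^{r}\left(1-\tilde{\mathbf{q}}_{\mathbf{i,2}}\left[\mathbf{k}\right]\right)\geq\left(c_{1}/(N\sigma)^{2}\right)^{r}$ leaves implicit.
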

\begin{proof}
Since $\mathcal{N}_{i,2}\in\mathcal{R}_{2}^{idx}\left(\nu\sigma,1\right),$
by Lemma \ref{lemma:q2} there exists for each subset $\mathcal{N}_{i,2}$
an associated function $\tilde{\mathbf{q}}_\mathbf{i,2}[{k}]=\tilde{\mathbf{q}}_\mathbf{i,2}(\mathbf{k}/N)$.
Consequently, for all $\mathbf{k}_{m}\in\mathcal{N}_{2}$ we obtain
\[
\mathbf{q_{2}}\left[\mathbf{k}_{m}\right]=\prod_{i=1}^{r}\left(1-\mathbf{q_{i,2}}\left[\mathbf{k}_{m}\right]\right)-\rho=-\rho.
\]
For $N\sigma\geq\max\left\{ \sqrt{\frac{c_{1}}{c_{2}}},\left(\varepsilon_{1}\right)^{-1}\right\} $
we get for all $\mathbf{k}\in\mathcal{N}_{2}^{C}$ 
\begin{eqnarray*}
\mathbf{q_{2}}[\mathbf{k}] & = & \prod_{i=1}^{r}\left(1-\mathbf{q_{i,2}}\left[\mathbf{k}\right]\right)-\rho\geq\left(\frac{c_{1}}{\left(N\sigma\right)^{2}}\right)^{r}-\rho.
\end{eqnarray*}
By setting 
\[
\rho:=\arg\min_{\mathbf{k}\in\mathcal{N}_{2}^{C}}\mathbf{q_{2}}[\mathbf{k}]\geq\frac{1}{2}\left(\frac{c_{1}}{\left(N\sigma\right)^{2}}\right)^{r},
\]
we conclude the proof. The condition $N\sigma>\left(\frac{1}{2}\right)^{\frac{1}{2r}}\sqrt{c_{1}}$
guarantees that $\rho<1$.
\end{proof}
Once we constructed the function $\mathbf{q_{2}}[\mathbf{k}]$, the proof follows
the one-dimensional case. By considering Lemmas \ref{eq:lemma3.4_2}
and \ref{lem:q_prod2} and using similar arguments to (\ref{eq:low-bound})
and (\ref{eq:upper-bound}), we conclude 
\begin{equation*}
\begin{split}
\rho\left\Vert \mathbf{h}\right\Vert _{1}&\leq\left\langle \mathbf{q},\mathbf{h}\right\rangle \leq\left(3^{r}-1\right)\tilde{C}_{2}(\mathbf{g}_2,r)\delta\\&+c^{*}\left(3^{r}-1\right)\nu^{-6}\left\Vert \mathbf{h}\right\Vert _{1}.
\end{split}
\end{equation*}
Using (\ref{eq:rho2}) we get for sufficiently large $\nu$ that 
\[
\left\Vert \mathbf{h}\right\Vert _{1}\leq C_{2}\left(\mathbf{g}_2,r\right)\left(N\sigma\right)^{2r}\delta,
\]
 for some constant ${C}_{2}(\mathbf{g}_2,r)$ which depends on the kernel
$\mathbf{g}_2$ and the Rayleigh regularity $r$.

\section{\label{sec:Numerical-Experiments}Numerical Experiments}

We conducted numerical experiments to validate the theoretical results of this paper.
The simulated signals were generated in two steps. First, random locations were
sequentially added to the signal\textquoteright s support in the interval
$[-1,1]$ with discretization step of 0.01, while keeping a fixed regularity
condition. Once the support was determined, the amplitudes were drawn
randomly from an i.i.d normal distribution with standard deviation
SD = 10. For positive signals, the amplitudes are taken to be the absolute values of the normal variables.

The experiments were conducted with the Cauchy kernel $\mathbf{g}(t)=\frac{1}{1+\left(\frac{t}{\sigma}\right)^{2}}$,
$\sigma=0.1$. We set the separation constant to be $\nu=0.5$, which
was evaluated in \cite{bendorySOP} to be  the minimal separation constant, guaranteeing the existence of
interpolating polynomial as in Lemma \ref{lemma:q}.
Figure \ref{fig:Example} presents an example for the estimation of
the signal (\ref{eq:x}) from (\ref{eq:model1}) with $r=2$. As can
be seen, the solution of the convex problem (\ref{eq:cvx}) detects
the support of the signal with high precision in a noisy environment of $27$ dB. Figure \ref{fig:2d}
presents an example for recovery  of a two-dimensional signal from a stream of  Cauchy kernels with $r=2$ and
$\nu=0.8$.

\begin{figure}
\begin{centering}

\includegraphics[scale=0.4]{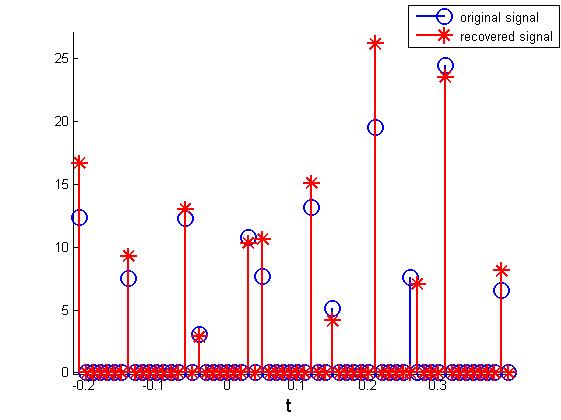}
\end{centering}

\protect\caption{\label{fig:Example}Example for the recovery of a signal of the form of (\ref{eq:x}) from stream of  Cauchy kernels
with $\sigma=0.1$, Rayleigh regularity of $r=2$, separation constant
of $\nu=0.5$ and noise level of $\delta=75$ (SNR=27dB).}
\end{figure}

\begin{figure}
\begin{centering}
\includegraphics[scale=0.4]{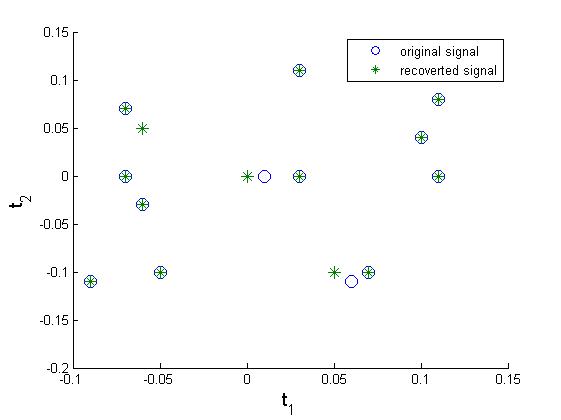}
\par\end{centering}

\protect\caption{\label{fig:2d}An example for the recovery of a two-dimensional signal of
the form (\ref{eq:x2}) from the measurements (\ref{eq:model2}),
with $r=2$, $\delta=400$ and $\nu=0.8$. The figure presents merely the
locations (support) of the original and the recovered signals.}
\end{figure}

Figure \ref{fig:mean_err} shows the localization error as a function
of the noise level $\delta$. To clarify, by localization error we
 mean the distance between the support of the original
signal and the support of the recovered signal. Figure \ref{fig:positive-negative}
compares the localization error for positive signals 
and general real signals (i.e. not necessarily positive) from stream of Cauchy pulses. For general signals,
we solved a standard $\ell_{1}$ minimization problem as in \cite{bendorySOP},
which is the same problem as (\ref{eq:cvx}) without the positivity
constraint $\mathbf{x}\geq0$. Plainly, the  localization error of positive signals is significantly smaller
than the error of  general signals. Figure \ref{fig:err_r}
shows that the error grows approximately linearly with the noise level
$\delta$ and increases with $r$.

\begin{figure*}
\begin{centering}

\subfloat[\label{fig:positive-negative}Mean localization error of positive signal
 and general signal (not necessarily positive coefficients) with
 $r=2.$]{\includegraphics[scale=0.4]{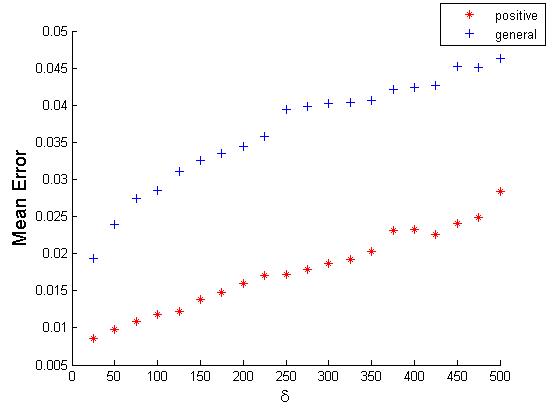}} \hspace{10pt} 
\subfloat[\label{fig:err_r}Mean localization error of positive signals for $r=2,3,4$.]{\includegraphics[scale=0.4]{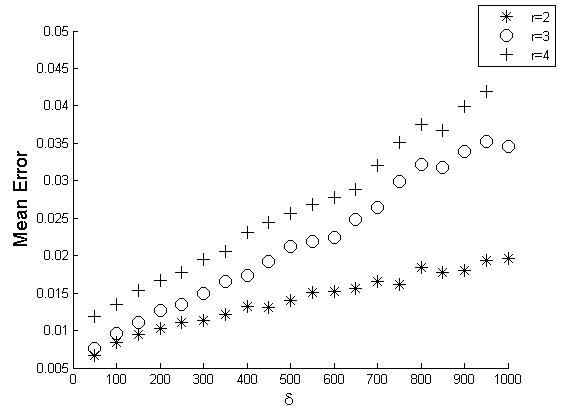}}

\par\end{centering}
\protect\caption{\label{fig:mean_err}Mean localization error from a stream of Cauchy pulses  as a function of the
noise level $\delta$. For each value of $\delta,$ 50 experiments
were conducted. }
\end{figure*}

\section{\label{sec:Conclusions}Conclusions}

In this paper, we have shown that a standard convex optimization program
can robustly recover the sets of delays and positive amplitudes from a stream of pulses. The recovery error is proportional to the noise level and grows exponentially with the density of signal's support, which is defined by the notion of Rayleigh regularity. The error also depends on the localization properties of the kernel. 
 In contrast to general stream of pulses
model as discussed in \cite{bendorySOP}, no separation is needed and the signal's support may be clustered.
It is of great interest to examine the theoretical results we have
derived on real applications, such as detection and tracking tasks in single-molecule
microscopy.

We have shown explicitly that our technique holds true for one and two
 dimensional signals. We strongly believe that similar results hold
for higher-dimension problems. Our results rely on the existence of
interpolating functions which were constructed in a previous work
\cite{bendorySOP}. Extension of the results of \cite{bendorySOP} to
higher dimensions will imply immediately the extension of our results
to higher dimensions as well.

In \cite{SOP_US}, it was shown that for general signals that satisfy
the separation condition (\ref{eq:1dseparation}), the solution of
a convex program results in a localization error of order 
$\sqrt{\delta}$. Namely, the support of the estimated signal is clustered around the support of the sought signal. It would be interesting to examine whether such a phenomenon exists in the positive case as well.

\section*{Acknowledgement}

The author is grateful to Prof Arie Feuer and Prof Shai Dekel for
their comments and support and to Veniamin Morgenshtern for helpful
discussions about \cite{morgenshtern2014stable}.


\bibliographystyle{ieeetr}


\end{document}